
\documentclass[journal,12pt,onecolumn]{IEEEtran}
%

\usepackage[letterpaper, top=1in, bottom=1in, left=1in, right=1in]{geometry}

\usepackage{setspace}


\usepackage[table]{xcolor}
\usepackage{times}
\usepackage{epsfig}
\usepackage{amsmath}
\usepackage{amsthm}
\usepackage{amsfonts}
\usepackage{dsfont}
\usepackage{graphicx}
\usepackage{amssymb}
\usepackage{amstext}
\usepackage{latexsym}
\usepackage{color,colortbl}
\usepackage{ifthen}
\usepackage{multirow}
\usepackage{verbatim}
\usepackage{array,tabularx}
\usepackage{arydshln}
\usepackage[mathscr]{euscript}
\usepackage{accents}
 \usepackage{cite}
\usepackage{hhline}
\usepackage{caption}
\usepackage{subcaption}
\usepackage{enumerate}
\usepackage{xcolor}
\usepackage{mathtools}
\usepackage{url}
\usepackage{xparse}
\usepackage{makecell}
\usepackage{varwidth}
\usepackage{bm}
\usepackage{arydshln}
\captionsetup[table]{position=bottom}
\usepackage{hyperref}

\usepackage{comment}
\usepackage{wrapfig}

\usepackage{caption}
\usepackage{float}
\usepackage{booktabs}

\usepackage{thmtools, thm-restate}

\usepackage{mathtools}

\usepackage{booktabs}

\usepackage{float}

\usepackage{multirow}

\newcolumntype{C}[1]{>{\centering\let\newline\\\arraybackslash\hspace{0pt}}m{#1}}

\usepackage[normalem]{ulem}

\usepackage{amsmath,pgfplots,amssymb}
\usepackage{subcaption,graphicx}

\newtheorem{theorem}{Theorem}
\newtheorem{theoremapp}{Theorem}

\newtheorem{lemma}{Lemma}

\newtheorem{proposition}{Proposition}
\newtheorem{propositionapp}{Proposition}

\theoremstyle{definition}
\newtheorem{example}{Example}
\newtheorem{definition}{Definition}

\theoremstyle{definition}
\newtheorem{remark}{Remark}

\theoremstyle{definition}

\DeclareMathOperator*{\argmax}{arg\,max}

\DeclareMathOperator*{\Inf}{Inf}

\usetikzlibrary{matrix,decorations.pathreplacing}
\usetikzlibrary{patterns}



\usepackage{pgfplots}
\usepgfplotslibrary{colormaps,fillbetween}
\usepackage{gnuplottex}
\usepgfplotslibrary{patchplots}


\definecolor{DarkGreen}{rgb}{0.1,0.5,0.1}
\definecolor{DarkRed}{rgb}{0.5,0.1,0.1}
\definecolor{DarkBlue}{rgb}{0.1,0.1,0.5}
\definecolor{DarkPurple}{rgb}{0.5,0.2,0.5}
\definecolor{DarkTurquoise}{rgb}{0.1,0.5,0.5}


\newcommand{\para}[1]{\textcolor{DarkRed}{ [#1 --PS] \normalsize }}

\newcommand{\sal}[1]{\textcolor{DarkBlue}{ [ #1 --salman ] \normalsize }}

\ifCLASSINFOpdf
\else
\fi
\hyphenation{op-tical net-works semi-conduc-tor}

\begin{document}
%



\title{Low Influence, Utility, and Independence in Differential Privacy: A Curious Case of $3 \choose 2$}




%
%
%

\author{Rafael G. L. D'Oliveira, Salman Salamatian, Muriel Médard, Parastoo Sadeghi   \\ RLE, Massachusetts Institute of Technology, USA\\ SEIT, University of New South Wales, Canberra, Australia \\ Emails: \{rafaeld, salmansa, medard\}@mit.edu, p.sadeghi@unsw.edu.au }

\maketitle

\begin{abstract}
We study the relationship between randomized low influence functions and differentially private mechanisms. Our main aim is to formally determine whether differentially private mechanisms are low influence and whether low influence randomized functions can be differentially private. We show that differential privacy does not necessarily imply low influence in a formal sense. However, low influence implies \emph{approximate} differential privacy. These results hold for both independent and non-independent randomized mechanisms, where an important instance of the former is the widely-used additive noise techniques in the differential privacy literature. Our study also reveals the interesting dynamics between utility, low influence, and independence of a differentially private mechanism. As the name of this paper suggests, we show that any two such features are simultaneously possible. However, in order to have a differentially private mechanism that has both utility and low influence, even under a very mild utility condition, one has to employ non-independent mechanisms.
\end{abstract}



%
\IEEEpeerreviewmaketitle

\section{Introduction}

Since its inception in 2006, differential privacy \cite{Differential_privacy} has emerged as one of the main frameworks to design, evaluate, and implement privacy-preserving data analysis methods (see e.g., \cite{dwork2008differential,dwork2014algorithmic} for a survey of results). Some highlighted applications of differential privacy are Apple's large-scale private learning of users' preferences and behaviors \cite{apple}, and the 2020 United States Census' privatization method to provide data privacy protection \cite{uscensus}, each impacting hundreds of millions of individuals.

In the differential privacy framework, databases are mapped to randomized query outputs. The aim of such a randomized mechanism is to make the answer of a query (almost) statistically indistinguishable regardless of whether an individual participates in the database or not.
In other words, even if an adversary had knowledge of all records in the database before the participation of a particular individual, they would still have severely limited capability in inferring the private record of the individual from the query's output. In the differential privacy literature, this is often referred to as ensuring individuals in a database have a \emph{low influence} over the query's output, see for example \cite{dwork:rothblum:vadhan;focs:2010, kifer;ccs:2018}. While this intuitive understanding appears consistent with the definition of differential privacy, in this paper, we argue it is prone to ignoring the core objective of data sharing: to provide utility. It turns out that even taking a mild utility constraint into account, the influence (statistical power) of an individual over the traditional class of independent randomized response DP mechanisms can be quite substantial in an absolute sense.

In this paper, we aim to shed light on whether differentially private mechanisms are low influence mechanisms (or vice versa) in a more formal sense.

\subsection{Are differentially private mechanisms actually low influence?}

We begin by revising the definition of differential privacy, formalizing the notion of low influence, and then examining a toy example to establish our key findings.

\noindent \textbf{Setting:} We consider a set $\mathcal{D}$ of datasets with a neighborhood relationship. This relationship is a symmetric relationship on $\mathcal{D}$, denoted by $d \sim d'$ whenever $d,d' \in \mathcal{D}$ are neighbors. In the differential privacy literature, a neighborhood is oftentimes defined as two datasets that differ in only one entry (corresponding to the response from one individual). A mechanism is a randomized function $\mathcal{M}: \mathcal{D} \rightarrow \mathcal{V}$, where $\mathcal{V}$ is referred to as the output space of the mechanism.

\noindent \textbf{Differential Privacy and Low Influence Conditions:}
\begin{definition}\label{def: dp} A randomized mechanism $\mathcal{M}:\mathcal{D} \to \mathcal{V}$ is $(\epsilon,\delta)$-differentially private if for any $d \sim d'$, we have  $\Pr [\mathcal{M}(d) \in \mathcal{S}] \leq e^\epsilon \Pr [\mathcal{M}(d') \in \mathcal{S}] + \delta$, for every $S \subseteq \mathcal{V}$. 
\end{definition}
In the original definition \cite{Differential_privacy}, $\delta = 0$. The case for $\delta > 0$ is a common relaxation or approximation \cite{dwork:mironov:delta:2006,Differential_privacy}. Usually, $\delta$ much smaller than $1$ is desired. Throughout the paper, we will use the shorthand  $(\epsilon,\delta)$-DP instead of both $(\epsilon,\delta)$-differential privacy or $(\epsilon,\delta)$-differentially private.

We propose the following adaptation of the notion of low influence from \cite{penrose1946elementary}.\footnote{This notion is widely used in the fields of social choice theory and in the analysis of Boolean functions \cite{o2014analysis}.} 
\begin{definition}\label{def: low influence mechanism}
A randomized mechanism $\mathcal{M}:\mathcal{D} \to \mathcal{V}$ is $\iota$-low influence ($\iota$-LI for short) if for any $d \sim d'$, we have $\Pr[\mathcal{M}(d) \neq \mathcal{M}(d')] \leq \iota$. 
\end{definition}
In other words, the output of a randomized mechanism does not change much, statistically speaking, for neighboring datasets. Indeed, if $\iota = 0$, then no two neighboring datasets can be statistically distinguished. 

In short, the DP constraints on the probabilities are relative (modulo a constant factor $\delta$), while the LI constraints are absolute. To see why such a definition matters and how it is different from the DP constraints, consider a dataset containing YES/NO private votes from $2n+1$ individuals on a sensitive matter. Assume an adversary somehow knows $n$ individuals voted YES and another $n$ individuals voted NO. Hence, the majority is decided by the last individual, whose vote is unknown to the adversary. Assume a mildly useful (truthful) DP mechanism to approximate the majority function, where a 55\% biased coin is tossed to determine whether to reveal the true majority outcome. In the DP framework, the deciding vote is said to be statistically indistinguishable by a rather small multiplicative factor $\epsilon =\ln(\frac{0.55}{0.45}) \approx 0.2$. However, the probability of giving the \emph{same} YES or NO answer, even when the individual voted differently, is only $0.55\times 0.45\times 2 = 0.495$. In other words, $\iota = 0.505$ and more than half of the time the deciding vote influences the randomized outcome. In such randomized coin-toss models, insisting on even the slightest utility (slightest bias towards telling the truth) results in $\iota \geq 0.5$ -- a 50\% or more chance for an individual to influence the majority outcome in a borderline dataset.

It may appear that the only way to achieve lower than 50\% influence is through being statistically not truthful in one realization of the majority. For example, assume the mechanism declares YES as majority 91\% and 89\% of the time when the actual majority is YES and NO, respectively. At the cost of rendering the mechanism almost useless (giving the wrong answer with probability $0.5\times(0.09+0.89) = 0.49$ assuming equally-likely votes), the probability of giving the same answer is raised to $0.91\times 0.89+0.09\times 0.11 = 0.8198$, meaning $\iota = 0.1802$, while still achieving $\epsilon =\max\{\ln(\frac{0.11}{0.09}),\ln(\frac{0.91}{0.89}) \} \approx 0.2$ according to the DP framework.

 We show in this paper that there is a systematic way to address the above issue through generalizing the randomized response model from independent across datasets to \emph{joint} across datasets. To see how this can be fixed in the above scenario, refer to Example \ref{ex: nonindep mech}.

\noindent \textbf{On Mechanism Independence:} In studying the low influence conditions it will be important whether the mechanism is independent or not. A mechanism is independent if the random variables $\{ \mathcal{M}(d_i) : d_i \in \mathcal{D} \}$ are mutually independent. 

To the best of the authors' knowledge, the differential privacy framework has been solely based on independent mechanisms. Indeed, one of the most common ways of constructing differentially private schemes for continuous-valued queries is to add a noise variable $N_u$, such as Laplace  or Gaussian noise \cite{Differential_privacy,dwork2014algorithmic} to the true query output $f(d) = u$ to obtain $\mathcal{M}(d) = u + N_u$. For discrete queries taking values over a finite field $\mathbb F_q$, this construction is equivalent to simulating noise over a discrete memoryless channel with transition probability $\Pr[\mathcal{M}(d) = v|f(d) = u]$. 

Even if the noise statistics are query-output dependent, as is proposed in \cite{nissim:smith:query:dependent:2007} (see also \cite{Viswanath_DP_Staircase} for more discussion), such schemes are still independent mechanisms. This is because the random variables $\{N_u\}$ are mutually independent and \[\Pr[\mathcal{M}(d_i) = v_i, \cdots, \mathcal{M}(d_j) = v_j|f(d_i) = u_i, \cdots, f(d_j) = u_j] \] decomposes into the product $\prod_{\ell = i}^j \Pr[\mathcal{M}(d_\ell) = v_\ell|f(d_\ell) = u_\ell]$ for any $i,j \in \{1,\cdots, |\mathcal D|\}$, $i\leq j$. In this paper, we will consider both independent and general (i.e., non-independent) mechanisms on $\mathcal D$. By a non-independent mechanism, we mean that the random variables $\{ \mathcal{M}(d_i) : d_i \in \mathcal{D} \}$ are in general  dependent upon each other.

As alluded to previously, at the surface level, the low influence and differential privacy conditions may seem synonymous to each other. Our first key observation is that this is not the case. As it turns out, the low influence and differential privacy conditions do not generally imply one another. However, low influence does imply the approximate form of differential privacy, i.e., for $\delta>0$. We illustrate this in the following example.

\begin{figure*}[!t]
\begin{subfigure}[t]{0.30\textwidth}
    \centering
    \begin{tikzpicture}
\begin{axis}[width=5cm, height=5cm,
stack plots=y,thick,smooth,no markers,xmin=0,xmax=1,ymin=0,ymax=1,xlabel={$x$},
  ylabel={$y$},axis on top,xtick={0,0.5,1},ytick={0,0.5,1},ylabel near ticks,xlabel near ticks]


\filldraw[fill=blue!30, draw=blue!30] (0,0) -- (33.33,66.66) -- (100,100) -- (66.66,33.33) -- cycle;

\end{axis}
\end{tikzpicture}
    \caption{$\log(2)$-DP}
    \label{fig: nonlinear LI implies DP a}
\end{subfigure}
\begin{subfigure}[t]{0.30\textwidth}
    \centering
    \begin{tikzpicture}
\begin{axis}[width=5cm, height=5cm,
stack plots=y,thick,smooth,no markers,xmin=0,xmax=1,ymin=0,ymax=1,xlabel={$x$},
  ylabel={$y$},axis on top,xtick={0,0.5,1},ytick={0,0.5,1},ylabel near ticks,xlabel near ticks]


\filldraw[fill=blue!30, draw=blue!30] (0,0) -- (0,40) -- (60,100) -- (100,100) -- (100,60) -- (40,0) -- cycle;

\end{axis}
\end{tikzpicture}
    \caption{$(0,\frac{2}{5})$-DP}
    \label{fig: nonlinear LI implies DP b}
\end{subfigure}
\begin{subfigure}[t]{0.30\textwidth}
    \centering
    \begin{tikzpicture}

\begin{axis}[width=5cm, height=5cm,
stack plots=y,thick,smooth,no markers,xmin=0,xmax=1,ymin=0,ymax=1,xlabel={$x$},
  ylabel={$y$},axis on top,xtick={0,0.5,1},ytick={0,0.5,1}]
  
\draw[scale=100,fill=blue!30,color=blue!30,domain=0:0.498,variable=\x] plot ({\x},{(5*\x-2)/(10*\x-5)});

\draw[scale=100,fill=blue!30,color=blue!30,domain=0.503:1,variable=\x] plot ({\x},{(5*\x-2)/(10*\x-5)});

\filldraw[fill=blue!30, draw=blue!30] (0,0) -- (0,40) -- (20,0)-- cycle;

\filldraw[fill=blue!30, draw=blue!30] (100,100) -- (100,60) -- (80,100)-- cycle;
\end{axis}
\end{tikzpicture}
    \caption{$\frac{2}{5}$-LI}
    \label{fig: nonlinear LI implies DP c}
\end{subfigure}
\caption{Figure (a) and (b) illustrate the $(\epsilon,\delta)$-DP regions, and Figure (c) shows the $\iota$-LI region for independent mechanisms, where $x = \Pr[\mathcal{M}(d_1)=1]$ and $y = \Pr[\mathcal{M}(d_2)=1]$.}
\label{fig: nonlinear LI implies DP}
\end{figure*}
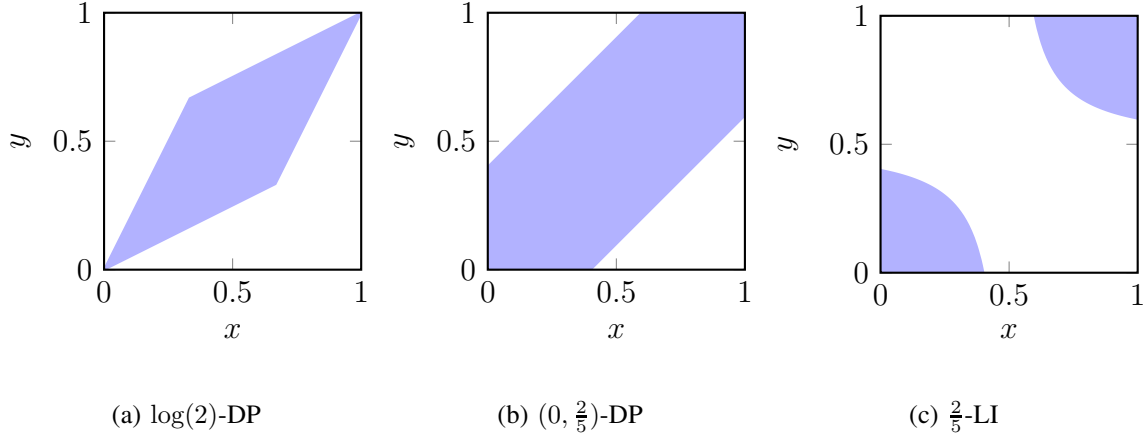 

\begin{example}\label{toy1} Consider the case where $\mathcal{D}$ consists of two neighboring datasets ${d_1 \sim d_2}$, and the output space is binary, i.e. $\mathcal{V} = \{1,2 \}$. For the sake of brevity, let $x = \Pr[\mathcal{M}(d_1) = 1]$ and $y = \Pr[\mathcal{M}(d_2) = 1]$. Thus, the full set of $(\epsilon,\delta)$-DP conditions are given by
    \begin{align*}
    x\leq e^{\epsilon} y + \delta , \quad y\leq e^{\epsilon} x + \delta , \quad 1-x\leq e^{\epsilon} (1-y) + \delta , \quad \text{and} \quad 1-y\leq e^{\epsilon} (1-x) + \delta .
\end{align*}
In Figs.~\ref{fig: nonlinear LI implies DP a} and \ref{fig: nonlinear LI implies DP b}, we show the set of points $(x,y) \in \mathbb{R}^2$ that satisfy these constraints, for $\epsilon = \log(2),\delta = 0$ and for $\epsilon = 0,\delta = \frac{2}{5}$, respectively. 

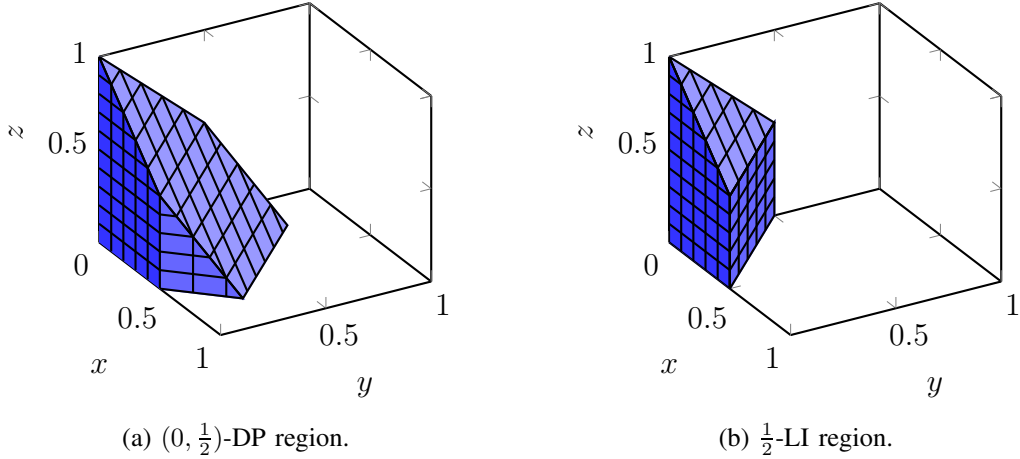
\begin{figure*}[!t]
\begin{subfigure}[]{0.45\textwidth}
    \centering
    \begin{tikzpicture}

\begin{axis}[width=6cm, height=6cm,stack plots=y,thick,smooth,no markers,xmin=0,xmax=1,ymin=0,ymax=1,zmin=0,zmax=1,xlabel={$x$},
  ylabel={$y$},zlabel={$z$},xtick={0,0.5,1},ytick={0.5,1},ztick={0.5,1},view={60}{30}]

\draw[color=black] (0,0,10) -- (50,0,10);
\draw[color=black] (0,0,20) -- (50,0,20);
\draw[color=black] (0,0,30) -- (50,0,30);
\draw[color=black] (0,0,40) -- (50,0,40);
\draw[color=black] (0,0,50) -- (50,0,50);
\draw[color=black] (0,0,60) -- (40,0,60);
\draw[color=black] (0,0,70) -- (30,0,70);
\draw[color=black] (0,0,80) -- (20,0,80);
\draw[color=black] (0,0,90) -- (10,0,90);

\draw[color=black] (10,0,0) -- (10,0,90);
\draw[color=black] (20,0,0) -- (20,0,80);
\draw[color=black] (30,0,0) -- (30,0,70);
\draw[color=black] (40,0,0) -- (40,0,60);
\draw[color=black] (50,0,0) -- (50,0,50);

\draw[color=black] (50,0,0) -- (75,25,0);
\draw[color=black] (50,0,10) -- (70,20,10);
\draw[color=black] (50,0,20) -- (65,15,20);
\draw[color=black] (50,0,30) -- (60,10,30);
\draw[color=black] (50,0,40) -- (55,5,40);

\draw[color=black] (50,0,50) -- (75,25,0);

\draw[color=black] (60,10,30) -- (60,10,0);
\draw[color=black] (70,20,10) -- (70,20,0);

\draw[color=black] (0,0,100) -- (50,0,50);
\draw[color=black] (0,10,90) -- (60,10,30);
\draw[color=black] (0,20,80) -- (70,20,10);
\draw[color=black] (0,30,70) -- (70,30,0);
\draw[color=black] (0,40,60) -- (60,40,0);
\draw[color=black] (0,50,50) -- (50,50,0);
\draw[color=black] (10,60,30) -- (40,60,0);
\draw[color=black] (20,70,10) -- (30,70,0);

\draw[color=black] (10,0,90) -- (0,10,90);
\draw[color=black] (20,0,80) -- (0,20,80);
\draw[color=black] (30,0,70) -- (0,30,70);
\draw[color=black] (40,0,60) -- (0,40,60);
\draw[color=black] (50,0,50) -- (0,50,50);

\draw[color=black] (55,5,40) -- (5,55,40);
\draw[color=black] (60,10,30) -- (10,60,30);
\draw[color=black] (65,15,20) -- (15,65,20);
\draw[color=black] (70,20,10) -- (20,70,10);

\draw[color=black] (75,25,0) -- (50,0,50) -- (0,0,100) -- (0,50,50) -- (25,75,0) -- cycle;

\filldraw[color=blue!80] (0,0,0) -- (0,0,100) -- (50,0,50) -- (50,0,0) -- cycle;

\filldraw[color=blue!60] (50,0,0) -- (50,0,50) -- (75,25,0) -- cycle;

\filldraw[color=blue!40] (75,25,0) -- (50,0,50) -- (0,0,100) -- (0,50,50) -- (25,75,0) -- cycle;

\end{axis}

\end{tikzpicture}
    \caption{$(0,\frac{1}{2})$-DP region.}
    \label{fig: LI implies DP a}
\end{subfigure}
\begin{subfigure}[]{0.45\textwidth}
    \centering
    \begin{tikzpicture}

\begin{axis}[width=6cm, height=6cm,stack plots=y,thick,smooth,no markers,xmin=0,xmax=1,ymin=0,ymax=1,zmin=0,zmax=1,xlabel={$x$},
  ylabel={$y$},zlabel={$z$},xtick={0,0.5,1},ytick={0.5,1},ztick={0.5,1},view={60}{30}]

\draw[color=black] (0,0,10) -- (50,0,10);
\draw[color=black] (0,0,20) -- (50,0,20);
\draw[color=black] (0,0,30) -- (50,0,30);
\draw[color=black] (0,0,40) -- (50,0,40);
\draw[color=black] (0,0,50) -- (50,0,50);
\draw[color=black] (0,0,60) -- (40,0,60);
\draw[color=black] (0,0,70) -- (30,0,70);
\draw[color=black] (0,0,80) -- (20,0,80);
\draw[color=black] (0,0,90) -- (10,0,90);

\draw[color=black] (10,0,0) -- (10,0,90);
\draw[color=black] (20,0,0) -- (20,0,80);
\draw[color=black] (30,0,0) -- (30,0,70);
\draw[color=black] (40,0,0) -- (40,0,60);
\draw[color=black] (50,0,0) -- (50,0,50);

\draw[color=black] (0,0,100) -- (50,0,50);
\draw[color=black] (0,10,90) -- (40,10,50);
\draw[color=black] (0,20,80) -- (30,20,50);
\draw[color=black] (0,30,70) -- (20,30,50);
\draw[color=black] (0,40,60) -- (10,40,50);

\draw[color=black] (10,0,90) -- (0,10,90);
\draw[color=black] (20,0,80) -- (0,20,80);
\draw[color=black] (30,0,70) -- (0,30,70);
\draw[color=black] (40,0,60) -- (0,40,60);
\draw[color=black] (50,0,50) -- (0,50,50);

\draw[color=black] (50,0,50) -- (50,0,0);
\draw[color=black] (40,10,50) -- (40,10,0);
\draw[color=black] (30,20,50) -- (30,20,0);
\draw[color=black] (20,30,50) -- (20,30,0);
\draw[color=black] (10,40,50) -- (10,40,0);

\draw[color=black] (50,0,40) -- (0,50,40);
\draw[color=black] (50,0,30) -- (0,50,30);
\draw[color=black] (50,0,20) -- (0,50,20);
\draw[color=black] (50,0,10) -- (0,50,10);

\draw[color=black] (0,0,0) -- (0,0,100) -- (50,0,50) -- (50,0,0) -- cycle;

\draw[color=black] (50,0,0) -- (50,0,50) -- (0,50,50) -- (0,50,0) -- cycle;

\draw[color=black] (0,0,100) -- (50,0,50) -- (0,50,50) -- cycle;

\filldraw[color=blue!80] (0,0,0) -- (0,0,100) -- (50,0,50) -- (50,0,0) -- cycle;

\filldraw[color=blue!60] (50,0,0) -- (50,0,50) -- (0,50,50) -- (0,50,0) -- cycle;

\filldraw[color=blue!40] (0,0,100) -- (50,0,50) -- (0,50,50) -- cycle;

\end{axis}

\end{tikzpicture}
    \caption{$\frac{1}{2}$-LI region.}
    \label{fig: LI implies DP b}
\end{subfigure}
\caption{$(\epsilon,\delta)$-DP and $\iota$-LI regions for non-independent mechanisms for $\delta = \iota =1/2$.}
\label{fig: LI implies DP}
\end{figure*}
 
On the other hand, an independent mechanism is $\iota$-LI if and only if $x$ and $y$ satisfy the following hyperbolic constraint.
\begin{align*}
    \iota &\geq \Pr[\mathcal{M}(d_1) \neq \mathcal{M}(d_2)] = 1 - \Pr[\mathcal{M}(d_1) = \mathcal{M}(d_2)] \\
    &= 1 - \Pr[\mathcal{M}(d_1) = 1 , \mathcal{M}(d_2) = 1] - \Pr[\mathcal{M}(d_1) = 2 , \mathcal{M}(d_2) = 2] \\
    &= 1 - \Pr[\mathcal{M}(d_1) = 1 ] \Pr[\mathcal{M}(d_2) = 1] - \Pr[\mathcal{M}(d_1) = 2 ] \Pr[\mathcal{M}(d_2) = 2] \\
    &= x+y-2xy.
\end{align*}
In Fig.~\ref{fig: nonlinear LI implies DP c}, we show the set of points $(x,y) \in \mathbb{R}^2$ that satisfy this constraint for $\iota = \frac{2}{5}$. 

Fig.~\ref{fig: nonlinear LI implies DP} illustrates how, when $\delta = 0$, the $(\epsilon,\delta)$-DP region and the $\iota$-LI region are distinct and neither can be embedded into the other.  However, an $\iota$-LI region can be embedded into an $(\epsilon,\iota)$-DP region for any $\epsilon \geq 0$, meaning that $\iota$-LI implies $(\epsilon,\iota)$-DP for independent mechanisms. 

This is also true for non-independent mechanisms. Indeed, let ${x = \Pr [\mathcal{M}(d_1)=1 , \mathcal{M}(d_2)=2]}$, ${y = \Pr [\mathcal{M}(d_1)=2 , \mathcal{M}(d_2)=1]}$, and ${z = \Pr [\mathcal{M}(d_1)=1 , \mathcal{M}(d_2)=1]}$, subject to being inside the probability simplex: $0 \leq x+y+z \leq 1$. To ensure $\iota$-LI, one must have 
\begin{align*}
    \iota &\geq \Pr[\mathcal{M}(d_1) \neq \mathcal{M}(d_2)] = 1 - \Pr[\mathcal{M}(d_1) = \mathcal{M}(d_2)] \\
    &= 1 - \Pr[\mathcal{M}(d_1) = 1 , \mathcal{M}(d_2) = 1] - \Pr[\mathcal{M}(d_1) = 2 , \mathcal{M}(d_2) = 2] \\
    &= 1-z - (1-x-y-z) = x+y.
\end{align*}
For brevity, we detail only two of the $(\epsilon,\delta)$-DP conditions below
\begin{align*}
    x+z\leq e^{\epsilon} (y+z) + \delta \quad \text{and} \quad 1-x-z\leq e^{\epsilon} (1-y-z) + \delta.
\end{align*}

Fig.~\ref{fig: LI implies DP} shows the set of points $(x,y,z) \in \mathbb{R}^3$, which satisfy $(0,\frac{1}{2})$-DP and $\frac{1}{2}$-LI conditions (and also belong to the probability simplex). We can see from the figure that the low influence region, Fig.~\ref{fig: LI implies DP b}, is contained in the differential privacy region, Fig.~\ref{fig: LI implies DP a}.

\begin{figure*}[!t]
\begin{subfigure}[]{0.45\textwidth}
    \centering
    \begin{tikzpicture}

\begin{axis}[width=6cm, height=6cm,
stack plots=y,thick,smooth,no markers,xmin=0,xmax=1,ymin=0,ymax=1,xlabel={$x$},
  ylabel={$y$},axis on top,xtick={0,0.5,1},ytick={0,0.5,1}]
  
\draw[scale=100,fill=blue!30,color=blue!30,domain=0:0.498,variable=\x] plot ({\x},{(5*\x-2)/(10*\x-5)});

\draw[scale=100,fill=blue!30,color=blue!30,domain=0.503:1,variable=\x] plot ({\x},{(5*\x-2)/(10*\x-5)});

\filldraw[fill=blue!30, draw=blue!30] (0,0) -- (0,40) -- (20,0)-- cycle;

\filldraw[fill=blue!30, draw=blue!30] (100,100) -- (100,60) -- (80,100)-- cycle;
\end{axis}
\end{tikzpicture}
    \caption{$\frac{2}{5}$-LI region}
    \label{fig: independent low influence is trivial a}
\end{subfigure}
\begin{subfigure}[]{0.45\textwidth}
    \centering
    \begin{tikzpicture}
        \begin{axis}[
            width=6cm, height=6cm,
            thick,smooth,no markers,
            xlabel={$x$},
            ylabel={$y$},
            xmin = 0, xmax = 1,
            ymin = 0, ymax = 1,
            zmin = 0, zmax = 1,
            axis equal image,
            view = {0}{90},
            samples  = 5,
            xtick={0,0.5,1},ytick={0,0.5,1},
            axis on top,
        ]

        \filldraw[color=red!30] (50,0) -- (50,50) -- (100,50) -- (100,0) -- cycle;
        
        \draw[color=blue!50, line width = 2pt] (50,0) -- (50,50) -- (100,50);

        \node[color=blue] at (50,55) {$S$};
        
        \node[color=red] at (80,30) {$R$};
        
            \addplot3[
                quiver = {
                    u = {(2*y-1)/((2*y-1)^2 + (2*x-1)^2)^(1/2)},
                    v = {(2*x-1)/((2*y-1)^2 + (2*x-1)^2)^(1/2)},
                    scale arrows = 0.05,
                },
                -stealth,
                domain = 0.55:1,
                domain y = 0:0.45,
                thick,
            ] {0};
            \end{axis}
    \end{tikzpicture}
    \caption{Vector field for nontrivial mechanisms.}
    \label{fig: independent low influence is trivial b}
\end{subfigure}
\caption{Figure (a) illustrates how an independent mechanism can only have arbitrarily low influence if it is trivial, i.e. $x$ and $y$ are both close to $0$ or close to $1$. Figure (b) shows that the influence for a nontrivial independent mechanism, i.e. belonging to the red region $R$, is lower bounded by the influence in the blue boundary region $S$, denoted by $I(S)$. We show that $I(S)=\frac{1}{2}$.}
\label{fig: independent low influence is trivial}
\end{figure*}
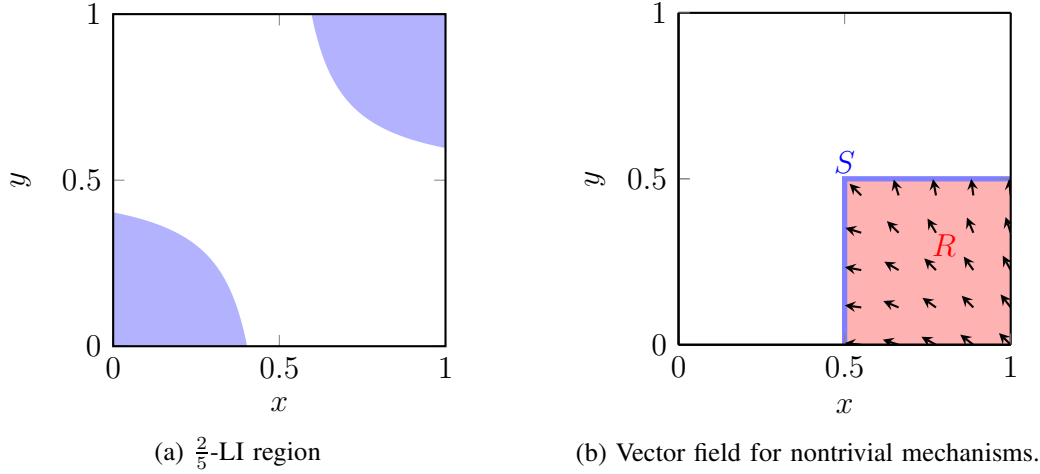
\end{example}
\subsection{What about the relationship between low influence and utility?}

In application, randomized mechanisms are generally used to approximate some function of interest. This approximation is often measured through some notion of utility, which is highly dependent on the application. It is then natural to ask if $\iota$-LI mechanisms can be useful, i.e. can a mechanism have utility while being low influence. It turns out that the answer to this heavily depends on whether the mechanism is independent, operating on single datasets, or not, jointly operating on the space of datasets. To formalize this, we introduce a universal notion of utility without being too restrictive. Roughly speaking, we say a mechanism is \emph{nontrivial} if the most likely output of the mechanism shows some variability across the datasets. Let us continue with the toy example.

\textbf{Example \ref{toy1} Continued:} We say the binary mechanism operating on $d_1 \sim d_2$ is nontrivial if outputting $v=0$ is more likely when the database is $d_1$ and outputting $v=1$ is more likely when the database is $d_2$ (or vice versa).\footnote{If the most likely query output is $v=0$ (or $v=1$) regardless of $d_1$ or $d_2$, then there will be no need to query and a fixed guess will always give the maximum likelihood outcome. Recall the voting example, where in two marginal neighboring datasets, being nontrivial means the randomized response must give the true majority answer with a probability more than 0.5.}

Looking again at independent mechanisms, and setting $x = \Pr[\mathcal{M}(d_1) = 1]$ and $y = \Pr[\mathcal{M}(d_2) = 1]$, a mechanism is nontrivial if $x \geq 1-x$ and $y \leq 1-y$ (or vice versa). Without loss of generality, let us assume that $x \geq \frac{1}{2}$ and $y\leq \frac{1}{2}$.\footnote{To satisfy the nontriviality condition, we additionally assume $x$ and $y$ cannot simultaneously be equal to $\frac{1}{2}$.} It turns out that minimizing the influence function of the independent mechanism $I(x,y)= x+y-2xy$ subject to $x \geq \frac{1}{2}$ and $y\leq \frac{1}{2}$ leads to $\min I(x,y) = \frac{1}{2}$. Thus, if an independent mechanism is $\iota$-LI, then $\iota \geq \frac{1}{2}$. This result is illustrated in Figure \ref{fig: independent low influence is trivial}. In Section \ref{sec: nontrivial independent not LI}, we show that this is also true for non-binary mechanisms. \emph{Thus, nontrivial independent mechanisms cannot be arbitrarily low influence.}

However, if we consider general mechanisms, we can achieve both low influence $\iota < \frac{1}{2}$ and a nontrivial utility. Let ${x = \Pr [\mathcal{M}(d_1)=1 , \mathcal{M}(d_2)=2]}$, ${y = \Pr [\mathcal{M}(d_1)=2 , \mathcal{M}(d_2)=1]}$, and ${z = \Pr [\mathcal{M}(d_1)=1 , \mathcal{M}(d_2)=1]}$, subject to the probability simplex: $0 \leq x+y+z \leq 1$. Fig.~\ref{fig: LI can be nontrivial} illustrates how the nontrivial region specified by $x+z >\frac{1}{2}$ and $y+z < \frac{1}{2}$ (Fig.~ \ref{fig: LI can be nontrivial b}) intersects with the $\frac{1}{2}$-LI region specified by $x+y \leq \frac{1}{2}$ (Fig.~\ref{fig: LI can be nontrivial a}). This holds true in general, i.e., for any influence $\iota > 0$.

\begin{figure*}[!t]
\begin{subfigure}[]{0.45\textwidth}
    \centering
    \begin{tikzpicture}

\begin{axis}[width=6cm, height=6cm,stack plots=y,thick,smooth,no markers,xmin=0,xmax=1,ymin=0,ymax=1,zmin=0,zmax=1,xlabel={$x$},
  ylabel={$y$},zlabel={$z$},xtick={0,0.5,1},ytick={0.5,1},ztick={0.5,1},view={70}{30}]

\draw[color=black] (0,0,10) -- (50,0,10);
\draw[color=black] (0,0,20) -- (50,0,20);
\draw[color=black] (0,0,30) -- (50,0,30);
\draw[color=black] (0,0,40) -- (50,0,40);
\draw[color=black] (0,0,50) -- (50,0,50);
\draw[color=black] (0,0,60) -- (40,0,60);
\draw[color=black] (0,0,70) -- (30,0,70);
\draw[color=black] (0,0,80) -- (20,0,80);
\draw[color=black] (0,0,90) -- (10,0,90);

\draw[color=black] (10,0,0) -- (10,0,90);
\draw[color=black] (20,0,0) -- (20,0,80);
\draw[color=black] (30,0,0) -- (30,0,70);
\draw[color=black] (40,0,0) -- (40,0,60);
\draw[color=black] (50,0,0) -- (50,0,50);

\draw[color=black] (0,0,100) -- (50,0,50);
\draw[color=black] (0,10,90) -- (40,10,50);
\draw[color=black] (0,20,80) -- (30,20,50);
\draw[color=black] (0,30,70) -- (20,30,50);
\draw[color=black] (0,40,60) -- (10,40,50);

\draw[color=black] (10,0,90) -- (0,10,90);
\draw[color=black] (20,0,80) -- (0,20,80);
\draw[color=black] (30,0,70) -- (0,30,70);
\draw[color=black] (40,0,60) -- (0,40,60);
\draw[color=black] (50,0,50) -- (0,50,50);

\draw[color=black] (50,0,50) -- (50,0,0);
\draw[color=black] (40,10,50) -- (40,10,0);
\draw[color=black] (30,20,50) -- (30,20,0);
\draw[color=black] (20,30,50) -- (20,30,0);
\draw[color=black] (10,40,50) -- (10,40,0);

\draw[color=black] (50,0,40) -- (0,50,40);
\draw[color=black] (50,0,30) -- (0,50,30);
\draw[color=black] (50,0,20) -- (0,50,20);
\draw[color=black] (50,0,10) -- (0,50,10);

\draw[color=black] (0,0,0) -- (0,0,100) -- (50,0,50) -- (50,0,0) -- cycle;

\draw[color=black] (50,0,0) -- (50,0,50) -- (0,50,50) -- (0,50,0) -- cycle;

\draw[color=black] (0,0,100) -- (50,0,50) -- (0,50,50) -- cycle;

\filldraw[color=blue!80] (0,0,0) -- (0,0,100) -- (50,0,50) -- (50,0,0) -- cycle;

\filldraw[color=blue!60] (50,0,0) -- (50,0,50) -- (0,50,50) -- (0,50,0) -- cycle;

\filldraw[color=blue!40] (0,0,100) -- (50,0,50) -- (0,50,50) -- cycle;

\end{axis}

\end{tikzpicture}
    \caption{$\frac{1}{2}$-LI region.}
    \label{fig: LI can be nontrivial a}
\end{subfigure}
\begin{subfigure}[]{0.45\textwidth}
    \centering
    \begin{tikzpicture}

\begin{axis}[width=6cm, height=6cm,stack plots=y,thick,smooth,no markers,xmin=0,xmax=1,ymin=0,ymax=1,zmin=0,zmax=1,xlabel={$x$},
  ylabel={$y$},zlabel={$z$},xtick={0,0.5,1},ytick={0.5,1},ztick={0.5,1},view={70}{30}]

\draw[color=black] (0,0,50) -- (50,0,50);
\draw[color=black] (10,0,40) -- (60,0,40);
\draw[color=black] (20,0,30) -- (70,0,30);
\draw[color=black] (30,0,20) -- (80,0,20);
\draw[color=black] (40,0,10) -- (90,0,10);
\draw[color=black] (50,0,0) -- (100,0,0);

\draw[color=black] (10,0,50) -- (10,0,40);
\draw[color=black] (20,0,50) -- (20,0,30);
\draw[color=black] (30,0,50) -- (30,0,20);
\draw[color=black] (40,0,50) -- (40,0,10);
\draw[color=black] (50,0,50) -- (50,0,0);
\draw[color=black] (60,0,40) -- (60,0,0);
\draw[color=black] (70,0,30) -- (70,0,0);
\draw[color=black] (80,0,20) -- (80,0,0);
\draw[color=black] (90,0,10) -- (90,0,0);

\draw[color=black] (50,0,50) -- (100,50,0);

\draw[color=black] (50,0,50) -- (100,0,0);
\draw[color=black] (60,10,40) -- (100,10,0);
\draw[color=black] (70,20,30) -- (100,20,0);
\draw[color=black] (80,30,20) -- (100,30,0);
\draw[color=black] (90,40,10) -- (100,40,0);

\draw[color=black] (100,0,0) -- (100,50,0);
\draw[color=black] (90,0,10) -- (90,40,10);
\draw[color=black] (80,0,20) -- (80,30,20);
\draw[color=black] (70,0,30) -- (70,20,30);
\draw[color=black] (60,0,40) -- (60,10,40);

\draw[color=black] (50,50,0) -- (100,50,0);
\draw[color=black] (40,40,10) -- (90,40,10);
\draw[color=black] (30,30,20) -- (80,30,20);
\draw[color=black] (20,20,30) -- (70,20,30);
\draw[color=black] (10,10,40) -- (60,10,40);

\draw[color=black] (50,50,0) -- (0,0,50);

\draw[color=black] (0,0,50) -- (0,0,50);
\draw[color=black] (10,0,50) -- (10,10,40);
\draw[color=black] (20,0,50) -- (20,20,30);
\draw[color=black] (30,0,50) -- (30,30,20);
\draw[color=black] (40,0,50) -- (40,40,10);
\draw[color=black] (50,0,50) -- (50,50,0);
\draw[color=black] (60,10,40) -- (60,50,0);
\draw[color=black] (70,20,30) -- (70,50,0);
\draw[color=black] (80,30,20) -- (80,50,0);
\draw[color=black] (90,40,10) -- (90,50,0);

\filldraw[color=blue!80] (0,0,50) -- (50,0,50) -- (100,0,0) -- (50,0,0) -- cycle;

\filldraw[color=blue!40] (50,0,50) -- (100,50,0) -- (100,0,0) -- cycle;

\filldraw[color=blue!20] (0,0,50) -- (50,50,0) -- (100,50,0) -- (50,0,50) -- cycle;

\end{axis}

\end{tikzpicture}
    \caption{Nontrivial Region.}
    \label{fig: LI can be nontrivial b}
\end{subfigure}
\caption{Nontrivial non-independent mechanisms can have arbitrarily low influence.}
\label{fig: LI can be nontrivial}
\end{figure*}
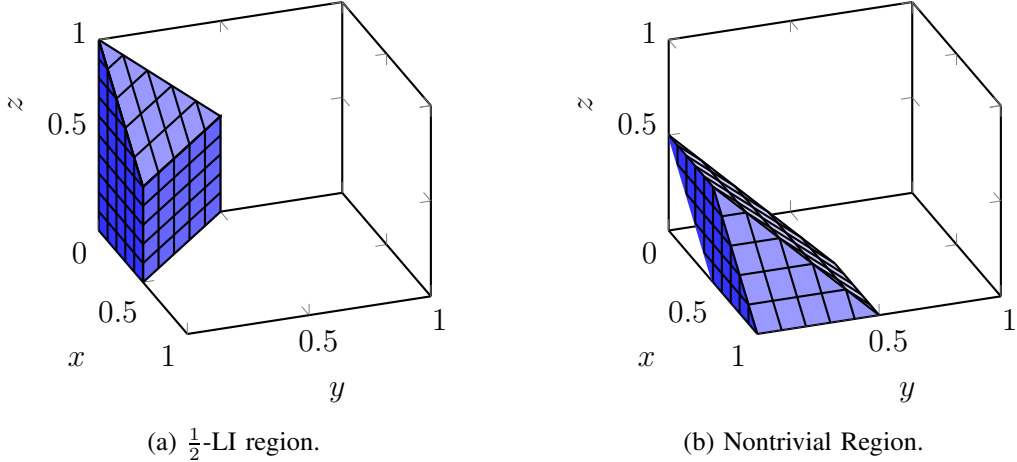

\subsection{Bringing the message home}

The toy example illustrates the interesting relationships and tensions between influence, differential privacy, utility and the independence property for randomized mechanisms. To understand the landscape of randomized mechanisms, we consider general mechanisms, beyond independent mechanisms, and characterize precisely how each notion relates to another.
In particular, we show the following.
\begin{enumerate}
    \item Independent nontrivial mechanisms cannot be $\iota$-LI, for $\iota < \frac{1}{2}$. In particular, additive noise mechanisms often used for differential privacy cannot be simultaneously low influence and useful.
    \item Low influence mechanisms are differentially private. More specifically, $\iota$-LI finite output mechanisms are $(0,\delta)$-DP, for $\delta = \iota(|\mathcal{V}| - 1)$.
    \item There exist non-independent mechanisms which are simultaneously  $(\epsilon,\delta)$-DP, $\iota$-LI, and nontrivial.
\end{enumerate}

\begin{figure}[!t]
    \centering
    \begin{tikzpicture}[every edge/.append style = {draw=blue, line cap=round,line width=2pt}] 

\definecolor{ao(english)}{rgb}{0.0, 0.5, 0.0}
\definecolor{ceruleanblue}{rgb}{0.16, 0.32, 0.75}
\definecolor{cadmiumred}{rgb}{0.89, 0.0, 0.13}

\path (0,0)  edge[ceruleanblue]  (4,0)
(4,0)  edge[cadmiumred]  (2,3.4641) 
(2,3.4641)  edge[ao(english)]  (0,0); 

\node at (0,-0.3) {Nontrivial};
\node at (4,-0.3) {Low Influence};
\node at (2,3.7641) {Independent};

\node at (2,1.5) {Differential};
\node at (2,1) {Privacy};

\end{tikzpicture}







    \caption{A differentially private mechanism can satisfy any two of the properties, but not all three.  }
    \label{fig: triangle}
\end{figure}
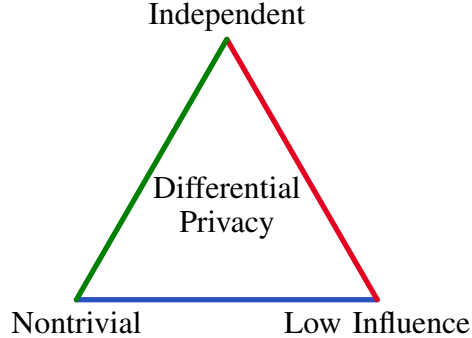

Together, these results form a more complete taxonomy of randomized mechanisms and their properties. An important consequence is that, while independent mechanisms may be sufficient for differential privacy, they fall short when one wants to design mechanisms that satisfy additional properties. 
The low influence condition is one instance that exemplifies the opportunities that arise from studying general non-independent randomized mechanisms. 
In other words, when restricting randomized mechanisms to satisfy additional conditions, e.g. low-influence, fairness, or composition rules, it might be necessary to move away from independent mechanisms to achieve the best performance.

Additionally, it turns out that it is possible to characterize the space of general joint mechanisms, at least for mechanisms with finite inputs and outputs.
While this space has a larger dimension due to the increased number of degrees of freedom, we show that this increased dimensionality may be a blessing, rather than a curse.
For example, we shall show that the $\iota$-LI property can be expressed as a linear constraint over the space of general mechanisms -- a fact that is no longer true if one restricts oneself to independent mechanisms.

The relationship between independence, differential privacy, triviality, and low influence is represented graphically in Figure~\ref{fig: triangle}. 
The green edge signifies independent and useful mechanisms that are prevalent in differential privacy literature (but cannot be arbitrarily low influence). 
The red edge signifies the subset of independent differentially private mechanisms that can be arbitrarily low influence, but are trivial (hence, not useful). 
The results of this paper advocate for further studying non-independent differentially private mechanisms (existing on the blue edge) that can be jointly low influence and useful. For example, non-independent schemes could be useful for differentially private voting schemes. As we showed in the first numerical example, if a private voting scheme is not low influence, then either a single voter changing their vote can change the outcome significantly, or the noise in the mechanism is playing a significant role in the outcome, both undesirable properties.

\subsection{Notation and Problem Setup}

We use $[k]$ to denote the set $\{1,2, \cdots, k\}$.
We denote by $\mathcal{D}$ the space of databases and often enumerate them as $\mathcal{D} = \{d_1 , \ldots , d_{|\mathcal{D}|} \}$. We consider a symmetric neighborhood relationship in $\mathcal{D}$ where $d_i,d_j \in \mathcal{D}$ are said to be neighbors if $d_i \sim d_j$. When clear from context, we use the shorthand notation $d,d'$ to refer to two generic neighboring datasets. In many applications, $\mathcal D =  \mathbb{F}_2^n$ and two databases $d,d' \in \mathbb{F}_2^n$ are neighbors if they differ in one element. 

To avoid degenerate cases, in this paper, we only consider spaces of databases with at least two datasets and which are connected, i.e. for any two $d$ and $d'$ in $\mathcal{D}$, there exists a finite sequence $d = d_1,\ldots,d_k = d'$, such that $d_i \sim d_{i+1}$ for $i \in [1:k-1]$.

We also consider a finite output space $\mathcal{V}$ which corresponds to the space over which the output of the queries lie. A few remarks are in order about finite output mechanisms, which are relatively less studied in the differential privacy literature compared to continuous-output or countable discrete-output mechanisms. Adding continuous noise to an inherently discrete-valued query (such as a counting query) and post processing may make the output less reliable. With the adoption of differential privacy framework for the US Census, there is emerging motivation to either understand existing discrete privacy-preserving mechanisms through the lens of differential privacy \cite{abs:lens} or propose new discrete and finite-valued mechanisms. In this context, the Geometric distribution has been studied in \cite{gm:paper, dp:finite:computers:arxiv, Geometric_Minimax} as the discrete counterpart to the Laplace distribution. The Differential privacy properties of the discrete Gaussian distribution was recently studied in \cite{count:dp:gaussian:arxiv}. However, the support of the noise is the entirety of $\mathbb Z$, which may lead to undesirable  outcomes. For example, according to a recent article in the New York Times \cite{nytimes}, the population of a county in the 2010 US Census was over-reported by a factor of almost 8. Post-processing (truncating or folding the probability mass function) can be used to limit the range of a discrete mechanism \cite{count:dp;arxiv}. However, such an approach may not be tailored to the problem at hand. A recent work \cite{finite:counting} has reported an optimized discrete distribution for counting queries with a finite range as low as $|\mathcal V| = 8$ that can outperform the discrete Gaussian distribution in terms of privacy-utility tradeoff. 

In Table \ref{tab: notation} we list the main symbols used throughout this paper together with their description.

\begin{table}[t]
\centering
\begin{tabular}{p{2cm} p{7.5cm} }  
Symbol & Definition \\
\toprule
$\mathcal{M}$  & A random mechanism.  \\[3pt]
$\mathcal{D}$  & Set of datasets. \\[3pt]
$\mathcal{V}$  & Output set of the mechanism. \\[3pt]
$d \in \mathcal{D}$  & A dataset. \\[3pt]
$v \in \mathcal{V}$  & An output value. \\[3pt]
$(\epsilon,\delta)$  & Differential privacy parameters. \\[3pt]
$\iota$  & Low influence parameter. \\[3pt]
$x,y,z$  & Coordinates of the Euclidean space $\mathbb{R}^{3}$. \\[3pt]
$\mathfrak{I}$ & Set of independent mechanisms.\\[3pt]
$\mathfrak{M}$ & Set of general mechanisms.\\

\bottomrule 
\end{tabular}
\caption{List of the main symbols used with their descriptions.}
\label{tab: notation}
\end{table}

\section{A Classification of Randomized Mechanisms}
\subsection{Independent versus Joint Mechanisms}
Randomized mechanisms are a key notion in differential privacy. In the literature, these have usually been defined as we do in Definition \ref{def: independent mechanism}. We, however, refer to them as independent mechanisms. The reason for this is that, in Definition \ref{def: joint mechanism}, we present a more general notion of mechanism, which we call joint mechanisms. These more general mechanisms are needed for the results in Sections \ref{sec: LI implies DP} and \ref{sec: nontrivial lowinfluence dp schemes}. 


We begin by defining independent mechanisms. To do this, we need the notion of a probability simplex. Given a finite set $\mathcal{V}$, the probability simplex over $\mathcal{V}$ is the set 
\[  \Delta (\mathcal{V}) = \left \{ \lambda = (\lambda_1, \cdots, \lambda_{|\mathcal{V}|})\in \mathbb{R}^{|\mathcal{V}|} : \lambda_i \geq 0 \quad \text{for all $i$, and} \quad \sum_{i=1}^{|\mathcal{V}|} \lambda_i = 1 \right \} .\]

\begin{definition} \label{def: independent mechanism}
An independent mechanism $\mathcal M$ with domain $\mathcal{D}$ and range $\mathcal{V}$ is associated with a mapping $g: \mathcal{D} \rightarrow \Delta (\mathcal{V})$. On input $d \in \mathcal{D}$, the randomized mechanism outputs $v \in \mathcal{V}$ with probability $\Pr[\mathcal M(d) = v]$. We denote the set of independent mechanisms by $\mathfrak{I}$.
\end{definition}

An independent mechanism $\mathcal{M}: \mathcal{D} \rightarrow \mathcal{V}$ is solely determined by the marginal probabilities $M_{ij} \doteq \Pr[\mathcal M(d_i)=v_j]$, which can be arranged in a $|\mathcal D|\times |\mathcal V|$ stochastic matrix. In other words, \[ \mathfrak{I} \cong \left \{ M \in \mathbb{R}^{ |\mathcal{D}| \times |\mathcal{V}|} : M_{ij} \geq 0, \quad  (i,j) \in [|\mathcal{D}|] \times [|\mathcal{V}|],\quad \sum_{j=1}^{|\mathcal{V}|} M_{ij} = 1,  \quad i \in [|\mathcal{D}|] \right \} .\] 
Because of this relation, we often identify $\mathfrak{I}$ with its isomorphic subset of $\mathbb{R}^{ |\mathcal{D}| \times |\mathcal{V}|}$.

Note that independent mechanisms contain both (less common) \emph{query-output-dependent} mechanisms, such as \cite{nissim:smith:query:dependent:2007} (see also \cite{Viswanath_DP_Staircase} for more discussion) and \emph{query-output-independent} mechanisms that are much more commonplace in the differential privacy literature. For discrete queries taking values over a finite field $\mathbb F_q$, we say an (additive) independent mechanism operating on the true query output $f(d)$ is  query-output-independent if for all $u,v \in \mathbb F_q$ and $d\in \mathcal D$, we have $\Pr[\mathcal M(d) = v|f(d) = u] = \Pr[N = v-u]$, where $N$  is an independent noise variable whose distribution does not depend on $u$. We say an (additive) independent mechanism operating on the true query output $f(d)$ is  query-output-dependent if for all $u,v \in \mathbb F_q$ and $d\in \mathcal D$, we have $\Pr[\mathcal M(d) = v|f(d) = u] = \Pr[N_u = v-u]$, where $N_u$ is an independent noise variable whose distribution depends on $u$. We refer to both such mechanisms as independent because, as for all independent mechanisms, the random variables $\{\mathcal M(d) : d \in \mathcal{D} \}$ are mutually independent. 

This is not the case for the joint mechanisms we now define.




\begin{definition} \label{def: joint mechanism}
A joint randomized mechanism $\mathcal{M}:\mathcal{D} \rightarrow \mathcal{V}$ is defined by the joint probabilities $\Pr[\mathcal{M}(d_1)=v_1 , \ldots, \mathcal{M}(d_{|\mathcal{D}|})=v_{|\mathcal{D}|}]$ for every $(v_1, \ldots, v_{|\mathcal{D}|}) \in \mathcal{V}^{|\mathcal{D}|}$. We denote the set of joint mechanisms by $\mathfrak{M}$.
\end{definition}

A joint mechanism $\mathcal{M}: \mathcal{D} \rightarrow \mathcal{V}$ is solely determined by the joint probabilities of the form $P_{v_1,\ldots,v_{|\mathcal{D}|}} \doteq \Pr[\mathcal{M}(d_1)=v_1 , \ldots, \mathcal{M}(d_{|\mathcal{D}|})=v_{|\mathcal{D}|}]$. Thus,
\begin{multline} \label{eq: space of joint}
\mathfrak{M} \cong \Bigg \{ P_{v_1,\ldots,v_{|\mathcal{D}|}} \in \mathbb{R}^{{|\mathcal{V}|}^{|\mathcal{D}|}} : P_{v_1,\ldots,v_{|\mathcal{D}|}} \geq 0, \quad \text{for every $(v_1,\ldots,v_{|\mathcal{D}|}) \in \mathcal{V}^{|\mathcal{D}|}$ and} \\\quad \sum_{(v_1,\ldots,v_{|\mathcal{D}|}) \in \mathcal{V}^{|\mathcal{D}|}} P_{v_1,\ldots,v_{|\mathcal{D}|}} = 1 \Bigg \} .
\end{multline}
Because of this relation, we often identify $\mathfrak{M}$ with its isomorphic subset of $\mathbb{R}^{{|\mathcal{V}|}^{|\mathcal{D}|}}$.

Clearly the set of joint mechanisms contains the class of independent mechanisms as a special case.  That is, for an independent mechanism and for every $(v_1, \ldots, v_{|\mathcal{D}|}) \in \mathcal{V}^{|\mathcal{D}|}$, we have 
\[ \Pr[\mathcal{M}(d_1)=v_1 , \ldots, \mathcal{M}(d_{|\mathcal{D}|})=v_{|\mathcal{D}|}] = \prod_{i=1}^{|\mathcal D|} \Pr[\mathcal M(d_i)=v_i]. \]

We have already identified the space of independent mechanisms $\mathfrak{I}$ with a subset of $\mathbb{R}^{|\mathcal{D}| \times |\mathcal{V}|}$, and the space of general mechanisms $\mathfrak{M}$ with a subset of $\mathbb{R}^{{|\mathcal{V}|}^{|\mathcal{D}|}}$. Since an independent mechanism is a special case of joint mechanism, it is clear that there must exist a subset of $\mathfrak{M}$ isomorphic to  $\mathfrak{I}$. We characterize this subset in the following proposition.

\begin{proposition} \label{pro: surface of independent mechanisms}
It holds that,
\begin{multline*}
    \mathfrak{I} \cong \Bigg \{ P_{v_1,\ldots,v_{|\mathcal{D}|}} \in \mathfrak{M} : P_{v_1,\ldots,v_{|\mathcal{D}|}} = \prod_{i=1}^{|\mathcal{D}|} \sum_{({w_1,\ldots,w_{|\mathcal{D}|}}) \in \mathcal{V}^{|\mathcal{D}|} : w_i = v_i }  P_{w_1,\ldots,w_{|\mathcal{D}|}} , \\ \text{for every $(v_1,\ldots,v_{|\mathcal{D}|}) \in \mathcal{V}^{|\mathcal{D}|}$} \Bigg \} .
\end{multline*}
In other words, $\mathfrak{I}$ is a polynomial of degree $|\mathcal{D}|$ in $\mathbb{R}^{{|\mathcal{V}|}^{|\mathcal{D}|}}$.
\end{proposition}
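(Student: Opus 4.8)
The plan is to recognize that the inner sum $\sum_{(w_1,\ldots,w_{|\mathcal{D}|}):\,w_i=v_i}P_{w_1,\ldots,w_{|\mathcal{D}|}}$ appearing in the displayed condition is precisely the $i$-th marginal $\Pr[\mathcal{M}(d_i)=v_i]$ of the joint distribution $P$, so that the condition reads ``$P$ equals the product of its own marginals'' -- which is exactly mutual independence of $\{\mathcal{M}(d_i):d_i\in\mathcal{D}\}$. I would make this rigorous by exhibiting the canonical inclusion $\Phi\colon\mathfrak{I}\to\mathfrak{M}$ and showing that its image equals the right-hand side set.

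First I would recall the identifications already set up in the text: an independent mechanism corresponds to a row-stochastic matrix $M\in\mathbb{R}^{|\mathcal{D}|\times|\mathcal{V}|}$ with $M_{ij}=\Pr[\mathcal{M}(d_i)=v_j]$, and a joint mechanism corresponds to the tuple $P=(P_{v_1,\ldots,v_{|\mathcal{D}|}})\in\mathbb{R}^{|\mathcal{V}|^{|\mathcal{D}|}}$ of joint probabilities. I would then define $\Phi(M)$ to be the joint mechanism with $P_{v_1,\ldots,v_{|\mathcal{D}|}}=\prod_{i=1}^{|\mathcal{D}|}M_{i v_i}$; by the observation stated just before the proposition this lands in $\mathfrak{M}$ and is the standard way of regarding an independent mechanism as a joint one.

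The crux is the marginalization identity: for row-stochastic $M$ and any index $i$,
\[
\sum_{(w_1,\ldots,w_{|\mathcal{D}|})\,:\,w_i=v_i}\ \prod_{j=1}^{|\mathcal{D}|}M_{j w_j}
= M_{i v_i}\prod_{j\neq i}\Big(\sum_{w_j\in\mathcal{V}}M_{j w_j}\Big)=M_{i v_i},
\]
the last step using that each row of $M$ sums to $1$. This yields both inclusions. If $P=\Phi(M)$, then the inner sum equals $M_{i v_i}$, so $P_{v_1,\ldots,v_{|\mathcal{D}|}}=\prod_i M_{i v_i}=\prod_i(\text{inner sum})$ and $\Phi(M)$ lies in the right-hand side set; moreover $M$ is recovered from $\Phi(M)$ by marginalization, so $\Phi$ is injective. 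Conversely, given $P\in\mathfrak{M}$ satisfying the displayed equations, I would set $M_{ij}:=\sum_{(w)\,:\,w_i=v_j}P_{w_1,\ldots,w_{|\mathcal{D}|}}$; then $M\geq 0$ since $P\geq 0$, and $\sum_j M_{ij}=\sum_{(w)}P_{w_1,\ldots,w_{|\mathcal{D}|}}=1$, so $M$ is row-stochastic, and the hypothesis gives $P_{v_1,\ldots,v_{|\mathcal{D}|}}=\prod_i M_{i v_i}=\Phi(M)$. Hence the right-hand side set equals $\Phi(\mathfrak{I})$, and since $\Phi$ is a bijection onto its image we conclude $\mathfrak{I}\cong\Phi(\mathfrak{I})$, which is the claimed characterization.

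For the concluding remark, I would note that each defining equation $P_{v_1,\ldots,v_{|\mathcal{D}|}}=\prod_{i=1}^{|\mathcal{D}|}\big(\sum_{(w)\,:\,w_i=v_i}P_{w_1,\ldots,w_{|\mathcal{D}|}}\big)$ is a polynomial identity in the coordinates $\{P_{w_1,\ldots,w_{|\mathcal{D}|}}\}$ whose right-hand side is a product of $|\mathcal{D}|$ linear forms, hence of total degree $|\mathcal{D}|$; equivalently, under the parametrization $\Phi$ each coordinate of $\Phi(M)$ is the degree-$|\mathcal{D}|$ monomial $\prod_i M_{i v_i}$. Either viewpoint supports the statement that $\mathfrak{I}$ embeds in $\mathbb{R}^{|\mathcal{V}|^{|\mathcal{D}|}}$ as a degree-$|\mathcal{D}|$ polynomial variety. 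I do not anticipate a genuine obstacle; the only points needing care are the bookkeeping in the marginalization identity and explicitly verifying that ``$\cong$'' is warranted, i.e.\ that $\Phi$ loses no information -- which is exactly what recovery-by-marginalization supplies.
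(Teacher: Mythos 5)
Your proposal is correct and follows essentially the same route as the paper: the key observation in both is that the inner sum is precisely the marginal $\Pr[\mathcal{M}(d_i)=v_i]$, so the displayed condition says exactly that the joint law factors as the product of its marginals. You simply make explicit the bijection and the recovery-by-marginalization step that the paper leaves implicit.
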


\begin{proof}
See the Appendix.
\end{proof}

Not to confuse the previous identifications, we refer to the space in the above proposition as $\mathfrak{I}_{\mathfrak{M}}$. The statement of the proposition is then equivalent to $\mathfrak{I} \cong \mathfrak{I}_{\mathfrak{M}} \subset \mathfrak{M}$. Throughout the paper, when we refer to a mechanism without specification, we mean a joint mechanism as these include independent ones.

We finish this section by showing a different characterization of randomized mechanisms and giving an explicit example of a non-independent mechanism.

\begin{proposition} \label{pro: mech prob}
A randomized mechanism is equivalent to a probability distribution on the space of all functions $h: \mathcal{D} \rightarrow \mathcal{V}$.
\end{proposition}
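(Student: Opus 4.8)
The plan is to exhibit an explicit bijection between joint mechanisms, as characterized by Definition~\ref{def: joint mechanism} and the parametrization in \eqref{eq: space of joint}, and probability distributions on the function space $\mathcal{V}^{\mathcal{D}} = \{ h : \mathcal{D} \to \mathcal{V} \}$, and then to check that this bijection preserves the operational content of a mechanism, namely the joint law of the output tuple $(\mathcal{M}(d_1), \ldots, \mathcal{M}(d_{|\mathcal{D}|}))$. So the first step is to pin down what ``equivalent'' should mean here: I would say a distribution $\mu$ on $\mathcal{V}^{\mathcal{D}}$ and a mechanism $\mathcal{M}$ are equivalent when, drawing $H \sim \mu$ and setting $\mathcal{M}(d) \doteq H(d)$, the resulting joint probabilities match those defining $\mathcal{M}$.

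Next I would observe that, because $\mathcal{D} = \{d_1, \ldots, d_{|\mathcal{D}|}\}$ and $\mathcal{V}$ are both finite, a function $h : \mathcal{D} \to \mathcal{V}$ is completely described by its value tuple $(h(d_1), \ldots, h(d_{|\mathcal{D}|})) \in \mathcal{V}^{|\mathcal{D}|}$, so $\mathcal{V}^{\mathcal{D}}$ is canonically in bijection with $\mathcal{V}^{|\mathcal{D}|}$; in particular it is finite, and a probability distribution on it is simply an assignment $\mu : \mathcal{V}^{\mathcal{D}} \to [0,1]$ with $\sum_{h \in \mathcal{V}^{\mathcal{D}}} \mu(h) = 1$, i.e. an element of the simplex $\Delta(\mathcal{V}^{\mathcal{D}})$.

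The core step then proceeds in two directions. Given a joint mechanism with joint probabilities $P_{v_1, \ldots, v_{|\mathcal{D}|}}$ as in \eqref{eq: space of joint}, define $\mu(h) \doteq P_{h(d_1), \ldots, h(d_{|\mathcal{D}|})}$; the nonnegativity and normalization constraints of \eqref{eq: space of joint} are term-by-term exactly the conditions $\mu(h) \geq 0$ and $\sum_h \mu(h) = \sum_{(v_1, \ldots, v_{|\mathcal{D}|}) \in \mathcal{V}^{|\mathcal{D}|}} P_{v_1, \ldots, v_{|\mathcal{D}|}} = 1$, so $\mu \in \Delta(\mathcal{V}^{\mathcal{D}})$. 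Conversely, given $\mu \in \Delta(\mathcal{V}^{\mathcal{D}})$, sample $H \sim \mu$ and set $\mathcal{M}(d) \doteq H(d)$; for any $(v_1, \ldots, v_{|\mathcal{D}|}) \in \mathcal{V}^{|\mathcal{D}|}$ the event $\{ h \in \mathcal{V}^{\mathcal{D}} : h(d_i) = v_i \text{ for all } i \}$ is the singleton consisting of the unique function with those values, whence $\Pr[\mathcal{M}(d_1) = v_1, \ldots, \mathcal{M}(d_{|\mathcal{D}|}) = v_{|\mathcal{D}|}] = \mu(h)$ for that $h$. The two assignments are visibly mutually inverse, which yields the claimed equivalence.

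I do not expect a genuine obstacle: the statement is essentially a change of viewpoint, and the only points requiring care are (i) making the notion of ``equivalence'' precise by insisting that the induced joint output distributions coincide, and (ii) not glossing over the identification of functions $h$ with their value tuples, which is what turns $\Delta(\mathcal{V}^{\mathcal{D}})$ literally into the set in \eqref{eq: space of joint}. As an optional closing remark one can note that, under this correspondence, the independent mechanisms $\mathfrak{I}_{\mathfrak{M}}$ from Proposition~\ref{pro: surface of independent mechanisms} are exactly the product distributions on $\mathcal{V}^{\mathcal{D}}$, i.e. those $\mu$ for which $H(d_1), \ldots, H(d_{|\mathcal{D}|})$ are mutually independent.
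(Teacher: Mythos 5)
Your proposal is correct and follows essentially the same route as the paper: identify each function $h:\mathcal{D}\to\mathcal{V}$ with its value tuple in $\mathcal{V}^{|\mathcal{D}|}$ and observe that the defining data of a joint mechanism in \eqref{eq: space of joint} is precisely a probability distribution on that finite set. You are somewhat more careful than the paper in spelling out both directions of the correspondence and in making the notion of ``equivalence'' precise, but the underlying argument is the same.
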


\begin{proof}
See the Appendix.
\end{proof}

This proposition gives an intuitive way of constructing randomized mechanisms by randomly choosing a function $h: \mathcal{D} \rightarrow \mathcal{V}$, as we do in the following example.

\begin{example} \label{ex: nonindep mech}

Consider the case where $\mathcal{D}$ consists of two neighboring datasets ${d_1 \sim d_2}$, and the output space is binary, i.e., $\mathcal{V} = \{1,2 \}$. Then, the set of all functions $h: \mathcal{D} \rightarrow \mathcal{V}$ is given by $\mathcal{V}^\mathcal{D} = \{ f_{11}, f_{12}, f_{21}, f_{22}\}$, where $f_{ij}: \mathcal{D} \rightarrow \mathcal{V}$ is such that 
\begin{align*}
   f_{ij}(d_k) = \left\{\begin{matrix}
i & \text{if $k=1$},\\ 
j & \text{if $k=2$}.
\end{matrix}\right.
\end{align*}

By proposition \ref{pro: mech prob}, a randomized mechanism is equivalent to a probability distribution over $\mathcal{V}^\mathcal{D}$. We denote this distribution by a $2\times 2$ matrix $P$ such that $P_{ij}$ is the probability of choosing the function $f_{ij}$. We define a random variable $F$ to denote this random function, thus $P_{ij} = \Pr[F=f_{ij}]$. The random mechanism is then obtained by randomly choosing a function $f_{ij} \in \mathcal{V}^\mathcal{D}$ according to $P$ and evaluating it on whatever dataset one has. Thus, 
\begin{align*}
    \Pr[\mathcal{M}(d_1)=i,\mathcal{M}(d_2)=j] = \Pr[F(d_1)=i,F(d_2)=j] = \Pr[F=f_{ij}] = P_{ij}.
\end{align*}
Thus, the mechanism $\mathcal{M}$ is independent if and only if $P_{ij} = (P_{i1}+P_{i2})(P_{1j}+P_{2j})$.

Revisiting the first numerical example in the introduction, let us denote by $d_1, d_2$ two datasets where the last individual voted YES (outcome 1) and NO (outcome 2), respectively. We use the following probability matrix for choosing $f_{ij}$:
\begin{align}
P = \left[\begin{matrix}
     0.45 &0.1\\0&0.45
     \end{matrix}\right]
\end{align}
It is clear that this mechanism is not independent and the probability of giving different answers due to the vote of the last individual is capped at $P(\mathcal{M}(d_1) \neq \mathcal{M}(d_2)) = P(\mathcal{M}(d_1) = 1, \mathcal{M}(d_2) = 2) = P(F = f_{12}) = P_{12} = 0.1$, while the probability of being truthful in both YES and NO majority cases is $P(\mathcal{M}(d_1) = 1) = P(F = f_{11}) + P(F = f_{12}) = P(\mathcal{M}(d_2) = 2) = P(F = f_{12})+ P(F = f_{22}) = 0.55$, resulting in the same $\epsilon \approx 0.2$ as before. Note that we do not have to have a symmetric randomized response matrix to ensure a mechanism which does not favor one majority outcome over another.
\end{example}

\subsection{Differential Privacy, Low Influence, and Nontrivial Mechanisms}

In this section we define different families of mechanisms. 

Recall that differential privacy was defined in Definition \ref{def: dp}. We also use the following notion.


\begin{definition} \label{def: vdp}
A mechanism $\mathcal{M}: \mathcal{D} \to \mathcal{V}$ is $(\epsilon,\delta)$-value differentially private, or $(\epsilon,\delta)$-VDP in short, if for any two neighboring $d \sim d'$, we have $\Pr [\mathcal{M}(d) = v] \leq e^\epsilon Pr [\mathcal{M}(d') = v] + \delta$, for every $v \in \mathcal{V}$.
\end{definition}

A mechanism which is value differentially private is also differentially private.

\begin{proposition} \label{pro: vdp equiv dp}
Let the mechanism $\mathcal{M}$ be $(\epsilon, \delta)$-VDP. Then, $\mathcal{M}$ is $(\epsilon,(|\mathcal{V}|-1) \delta)$-DP.
\end{proposition}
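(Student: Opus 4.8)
The plan is to show that the $(\epsilon,\delta)$-VDP bound on individual output values can be "summed" over a set $\mathcal{S} \subseteq \mathcal{V}$ to recover the $(\epsilon, (|\mathcal{V}|-1)\delta)$-DP bound, with the only subtlety being how to avoid accumulating a factor of $|\mathcal{S}|$ (which could be as large as $|\mathcal{V}|$) on $\delta$. First I would fix two neighboring datasets $d \sim d'$ and an arbitrary subset $\mathcal{S} \subseteq \mathcal{V}$. If $\mathcal{S}$ is empty both sides are zero, so assume $\mathcal{S} \neq \emptyset$; also we may assume $\mathcal{S} \neq \mathcal{V}$, since for $\mathcal{S} = \mathcal{V}$ both probabilities equal $1$ and the inequality $1 \le e^\epsilon + (|\mathcal{V}|-1)\delta$ holds trivially for $\epsilon \ge 0$, $\delta \ge 0$. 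Hence $1 \le |\mathcal{S}| \le |\mathcal{V}|-1$.

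The key step is to apply the VDP inequality valuewise and sum. Writing $\Pr[\mathcal{M}(d) \in \mathcal{S}] = \sum_{v \in \mathcal{S}} \Pr[\mathcal{M}(d) = v] \le \sum_{v \in \mathcal{S}} \big(e^\epsilon \Pr[\mathcal{M}(d') = v] + \delta\big) = e^\epsilon \Pr[\mathcal{M}(d') \in \mathcal{S}] + |\mathcal{S}|\,\delta$, and then bounding $|\mathcal{S}| \le |\mathcal{V}|-1$, gives exactly $\Pr[\mathcal{M}(d) \in \mathcal{S}] \le e^\epsilon \Pr[\mathcal{M}(d') \in \mathcal{S}] + (|\mathcal{V}|-1)\delta$. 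Since $d \sim d'$ and $\mathcal{S}$ were arbitrary, this is the definition of $(\epsilon, (|\mathcal{V}|-1)\delta)$-DP.

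The main obstacle — really the only place where care is needed — is recognizing that the naive sum produces $|\mathcal{S}|\delta$ rather than $(|\mathcal{V}|-1)\delta$, and handling the edge case $\mathcal{S} = \mathcal{V}$ separately so that the crude bound $|\mathcal{S}| \le |\mathcal{V}| - 1$ is legitimate on the remaining cases. Everything else is a one-line summation of the defining inequality, so I expect the proof to be short; the statement is essentially a packaging lemma that lets later sections pass freely between the VDP and DP formulations (e.g. when proving that $\iota$-LI mechanisms are $(0,\iota(|\mathcal{V}|-1))$-DP).
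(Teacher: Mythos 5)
Your proposal is correct and follows essentially the same route as the paper: sum the VDP inequality over $v \in \mathcal{S}$ to obtain the bound with $|\mathcal{S}|\delta$, then observe that the DP condition holds trivially for $\mathcal{S} = \mathcal{V}$ so that $|\mathcal{S}|$ may be replaced by $|\mathcal{V}|-1$. Your explicit handling of the empty-set and full-set edge cases is slightly more careful than the paper's one-line remark, but the argument is the same.
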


\begin{proof}
Since $\mathcal{M}$ is $(\epsilon, \delta)$-VDP, it follows that $\Pr[\mathcal{M}(d)=v] \leq e^\epsilon \Pr[\mathcal{M}(d')=v] + \delta$, for every neighboring datasets $d \sim d'$. But then, for every neighboring datasets $d \sim d'$,
\begin{align*}
    \Pr[\mathcal{M}(d) \in \mathcal{S}] &= \sum_{v \in \mathcal{S}} \Pr[\mathcal{M}(d) = v] \leq \sum_{v \in \mathcal{S}} \left( e^\epsilon \Pr[\mathcal{M}(d') = v] + \delta \right) \\
    &= e^\epsilon \Pr[\mathcal{M}(d') \in \mathcal{S}] + |\mathcal{S}| \delta.
\end{align*}
Since the differential privacy condition trivially holds for $\mathcal{S} = \mathcal{V}$, we can upper bound $|\mathcal{S}|$ by $|\mathcal{V}|-1$, proving the result.
\end{proof}

In Definition \ref{def: low influence mechanism}, we present the notion of low influence mechanisms. This notion is an adaptation of the following definition, widely used in social choice theory and in the analysis of Boolean functions.

\begin{definition} \label{def: original influence}
The influence of coordinate $i$ on a function $f:\{ -1 ,1 \}^n \rightarrow \{-1 , 1 \}$ is defined as the probability $\Inf_i [f] = \Pr [f(x) \neq f(x^{\oplus i})]$, where $x$ is a uniformly distributed vector from $\{ -1 ,1 \}^n$, and $x^{\oplus i}$ is the vector such that $x^{\oplus i}_i = - x_i$  and $x^{\oplus i}_k = x_k$ for $k \in [n]-\{i\}$.
\end{definition}

If one thinks of $f$ as a voting rule, i.e., a rule for determining how to interpret $n$ votes for two candidates $\{ -1 ,1 \}$ as an election of one of them, then the influence $\Inf_i [f]$ measures the probability that the $i$-th voter will affect the outcome of the election. In this context, we say a function $f$ is $\iota$-low influence if $\Inf_i [f] \leq \iota$ for every $i \in [n]$.

We note that the probability in the above definition is taken over the argument of the function, and that the function itself is deterministic. The opposite is true in the differential privacy setup. The definition of low influence in Definition \ref{def: low influence mechanism} in the introduction accounts for this in measuring the influence through $\Pr[\mathcal{M}(d) \neq \mathcal{M}(d')] \leq \iota$, where the randomness is in the randomized mapping $\mathcal{M}$, not the inputs $d$ and $d'$.



Finally, we define nontrivial mechanisms as a randomized generalization of non-constant mechanisms. More precisely, we say a mechanism is nontrivial if its most likely output is not constant across datasets.

\begin{definition}\label{def:nontrivial} A mechanism $\mathcal{M}: \mathcal{D} \to \mathcal{V}$ is nontrivial if there exists neighboring datasets $d \sim d'$ such that $\argmax_v \Pr[\mathcal{M}(d) = v] \neq  \argmax_v \Pr[\mathcal{M}(d') = v]$.
\end{definition}

Note this definition of nontriviality is very mild. In practice, the utility of a mechanism is application dependent, and the mechanism being nontrivial is not sufficient, but rather necessary. Recall the voting example, where to nontrivially approximate the majority function, the most likely outcome must be different across two marginal datasets. We saw this can result in high $\iota$ influence in independent mechanisms, but can be controlled via joint mechanisms. Across non-marginal datasets, the most likely outcome may be identical without being detrimental to approximating the true majority function, DP or LI constraints.





\section{Nontrivial Independent Mechanisms are Not Low Influence} \label{sec: nontrivial independent not LI}

In this section, we show that if an independent mechanism is nontrivial, then it cannot be low influence. We begin by showing that although differential privacy is equivalent to a set of linear constraints on $\mathfrak{I}$, low influence is equivalent to a set of quadratic constraints.

\begin{proposition}
Let $\mathcal{M}: \mathcal{D} \to \mathcal{V}$ be an independent mechanism with matrix representation $M_{ij} = \Pr[\mathcal{M}(d_i)=v_j]$. Then, $\mathcal{M}$ is $(\epsilon,\delta)$-DP if and only if, for every $d_i \sim d_k$,
\begin{align*}
    \sum_{j:v_j \in \mathcal{S}} M_{ij} \leq e^\epsilon \sum_{j:v_j \in \mathcal{S}}M_{kj} + \delta \quad \text{for every $\mathcal{S} \subseteq \mathcal{V}$.}
\end{align*}
Similarly, $M$ is $(\epsilon,\delta)$-VDP if and only if, for every $d_i \sim d_k$, it holds that $M_{ij} \leq e^\epsilon M_{kj}+\delta$, for every $j \in [|\mathcal{V}|]$. In other words, both $(\epsilon,\delta)$-DP and $(\epsilon,\delta)$-VDP are linear constraints in the space $\mathfrak{I}$.
\end{proposition}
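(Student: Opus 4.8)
The plan is to simply unfold the definitions, using the fact that for an independent mechanism the entire behaviour on a single input $d_i$ is governed by the $i$-th row of $M$. By Definition~\ref{def: independent mechanism} the distribution of $\mathcal{M}(d_i)$ on $\mathcal{V}$ has mass function $v_j \mapsto M_{ij}$, so for any event $\mathcal{S} \subseteq \mathcal{V}$ we have $\Pr[\mathcal{M}(d_i) \in \mathcal{S}] = \sum_{j : v_j \in \mathcal{S}} M_{ij}$, and in particular $\Pr[\mathcal{M}(d_i) = v_j] = M_{ij}$. Everything else is substitution.

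First I would handle $(\epsilon,\delta)$-DP. Plugging the identity above into Definition~\ref{def: dp}, $\mathcal{M}$ is $(\epsilon,\delta)$-DP precisely when, for every pair of neighbours $d_i \sim d_k$ and every $\mathcal{S} \subseteq \mathcal{V}$,
\[
\sum_{j : v_j \in \mathcal{S}} M_{ij} \;\leq\; e^{\epsilon} \sum_{j : v_j \in \mathcal{S}} M_{kj} + \delta ,
\]
which is exactly the claimed inequality; the equivalence is by construction since the two statements become literally the same after substitution. Because $\sim$ is symmetric, the quantifier ``for every $d_i \sim d_k$'' already covers both orderings $(i,k)$ and $(k,i)$, so nothing is lost, and the extreme choices $\mathcal{S} = \emptyset$ and $\mathcal{S} = \mathcal{V}$ give trivially true inequalities. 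The $(\epsilon,\delta)$-VDP case is identical but specialized to the singleton event $\mathcal{S} = \{v_j\}$: substituting $\Pr[\mathcal{M}(d_i)=v_j] = M_{ij}$ into Definition~\ref{def: vdp} yields $M_{ij} \leq e^{\epsilon} M_{kj} + \delta$ for all $j \in [|\mathcal{V}|]$ and all neighbours $d_i \sim d_k$.

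Finally I would argue the linearity claim. Recall that $\mathfrak{I}$ is identified with the set of row-stochastic matrices in $\mathbb{R}^{|\mathcal{D}| \times |\mathcal{V}|}$, so the ambient coordinates are exactly the entries $M_{ij}$. Each inequality above has the form $\ell(M) \leq \ell'(M) + \delta$ with $\ell,\ell'$ linear functionals (partial row sums), i.e.\ each cuts out a closed half-space of $\mathbb{R}^{|\mathcal{D}| \times |\mathcal{V}|}$; both the DP and VDP conditions are then finite intersections of such half-spaces (at most $|\mathcal{D}|^{2}\cdot 2^{|\mathcal{V}|}$ of them for DP, indexed by ordered neighbour pairs and the choice of $\mathcal{S}$, and at most $|\mathcal{D}|^{2}\cdot|\mathcal{V}|$ for VDP), hence linear constraints. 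There is no genuine obstacle here; the only point that needs a moment's care is that the quantification in Definition~\ref{def: dp} runs over \emph{all} subsets $\mathcal{S}$ and \emph{all} ordered neighbour pairs, which is precisely what makes the resulting system a finite collection of linear inequalities rather than a single one.
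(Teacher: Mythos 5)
Your argument is correct and matches the paper's proof, which likewise just substitutes $M_{ij} = \Pr[\mathcal{M}(d_i)=v_j]$ into Definitions~\ref{def: dp} and \ref{def: vdp} and observes that the resulting inequalities are linear in the entries of $M$. The extra remarks about half-spaces and the count of constraints are fine but not needed.
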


\begin{proof}
This follows directly from substituting $M_{ij} = \Pr[\mathcal{M}(d_i)=v_j]$ into the definitions of $(\epsilon,\delta)$-DP and $(\epsilon,\delta)$-VDP.
\end{proof}

\begin{proposition} \label{pro: independent low influence}
Let $\mathcal{M}: \mathcal{D} \to \mathcal{V}$ be an independent mechanism with matrix representation $M_{ij} = \Pr[\mathcal{M}(d_i)=v_j]$. Then $\mathcal{M}$ is $\iota$-LI if and only if, for every $d_i \sim d_k$,
\begin{align*}
    \sum_{j=1}^{|\mathcal{V}|} M_{ij} M_{kj} \geq 1- \iota .
\end{align*}
In other words, $\iota$-LI conditions are quadratic constraints over the space $\mathfrak{I}$.
\end{proposition}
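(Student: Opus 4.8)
The plan is to unwind the definition of $\iota$-LI for an independent mechanism directly, exactly mirroring the computation already carried out for the binary toy example in the introduction, but now for an arbitrary finite output space $\mathcal{V}$. Fix two neighboring datasets $d_i \sim d_k$. Because the mechanism is independent, the random variables $\mathcal{M}(d_i)$ and $\mathcal{M}(d_k)$ are mutually independent, with $\Pr[\mathcal{M}(d_i)=v_j] = M_{ij}$ and $\Pr[\mathcal{M}(d_k)=v_j] = M_{kj}$.

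The key step is to rewrite $\Pr[\mathcal{M}(d_i) \neq \mathcal{M}(d_k)]$ via its complement:
\begin{align*}
    \Pr[\mathcal{M}(d_i) \neq \mathcal{M}(d_k)] &= 1 - \Pr[\mathcal{M}(d_i) = \mathcal{M}(d_k)] = 1 - \sum_{j=1}^{|\mathcal{V}|} \Pr[\mathcal{M}(d_i) = v_j , \mathcal{M}(d_k) = v_j] \\
    &= 1 - \sum_{j=1}^{|\mathcal{V}|} \Pr[\mathcal{M}(d_i) = v_j]\Pr[\mathcal{M}(d_k) = v_j] = 1 - \sum_{j=1}^{|\mathcal{V}|} M_{ij} M_{kj},
\end{align*}
where the third equality is precisely where independence is used, and the second equality uses that the events $\{\mathcal{M}(d_i) = v_j = \mathcal{M}(d_k)\}$ over $j \in [|\mathcal{V}|]$ are disjoint and exhaust the event $\{\mathcal{M}(d_i) = \mathcal{M}(d_k)\}$. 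Then the condition $\Pr[\mathcal{M}(d_i) \neq \mathcal{M}(d_k)] \leq \iota$ is equivalent to $1 - \sum_{j=1}^{|\mathcal{V}|} M_{ij} M_{kj} \leq \iota$, i.e., $\sum_{j=1}^{|\mathcal{V}|} M_{ij} M_{kj} \geq 1 - \iota$. Since $\mathcal{M}$ is $\iota$-LI precisely when this holds for every neighboring pair $d_i \sim d_k$, the claimed equivalence follows.

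For the final sentence of the statement, I would simply observe that each constraint $\sum_{j=1}^{|\mathcal{V}|} M_{ij} M_{kj} \geq 1-\iota$ is a quadratic (bilinear) inequality in the entries of the matrix $M$, in contrast to the linear DP and VDP constraints established in the previous proposition, so no further work is needed there.

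There is no real obstacle here — the proof is a short, direct calculation. The only point requiring a moment's care is making explicit exactly where the independence hypothesis enters (the factorization of the joint probability into the product of marginals), and noting that the partition of $\{\mathcal{M}(d_i) = \mathcal{M}(d_k)\}$ into the events indexed by $j$ is valid because $\mathcal{V}$ is finite (so the sum is finite and there are no measurability concerns). This mirrors the toy-example computation in the introduction verbatim except for replacing the two-term sum by a sum over all of $\mathcal{V}$, so I would keep the write-up correspondingly brief.
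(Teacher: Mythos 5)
Your proof is correct and follows exactly the same route as the paper's: complement the event $\{\mathcal{M}(d_i)=\mathcal{M}(d_k)\}$, partition it over the finite output alphabet, factor the joint probability using independence, and read off the quadratic constraint. Nothing is missing.
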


\begin{proof}
By the definition of $\iota$-LI, for every $d_i \sim d_k$,
\begin{align*}
\begin{split}
    \iota &\geq \Pr[\mathcal M(d_i) \neq \mathcal M(d_k) ] = 1 - \Pr[\mathcal  M(d_i) = \mathcal M(d_k) ] \\
    &= 1 - \sum_{j=1}^{|\mathcal{V}|} \Pr[\mathcal M(d_i)=j , \mathcal M(d_k)=j] \\
    &= 1 - \sum_{j=1}^{|\mathcal{V}|} \Pr[\mathcal M(d_i)=j] \Pr[ \mathcal M(d_k)=j] \\
    &= 1 - \sum_{j=1}^{|\mathcal{V}|} M_{ij} M_{kj}.
\end{split}
\end{align*}
\end{proof}

Before we prove the main result of this section, let us revisit the toy example from the Introduction. We show that for our toy example, a nontrivial mechanism is not $\iota$-LI for $\iota < \frac{1}{2}$.

\begin{example} \label{ex: nontrivial indep not li}
Consider the case where $\mathcal{D}$ consists of two neighboring datasets ${d_1 \sim d_2}$, and the output space is binary, i.e. $\mathcal{V} = \{1,2 \}$. To be consistent with the simpler notation in the toy example in the Introduction, let $x \doteq M_{11} = \Pr[\mathcal M(d_1)=1]$ and $y \doteq M_{21} = \Pr[\mathcal M(d_2)=1]$. Suppose $M$ is nontrivial. Then, either (i) $\frac{1}{2}\leq x \leq 1$ and $0 \leq y \leq \frac{1}{2}$ or (ii) $0 \leq x \leq \frac{1}{2}$ and $ \frac{1}{2} \leq y \leq 1$. Recall that we additionally assume $x$ and $y$ cannot simultaneously be equal to $\frac{1}{2}$. Without loss of generality, we consider case (i) and denote 
\begin{align} \label{eq: example R*}
    R^* = \left \{(x,y)\in \mathbb{R}^2 : \frac{1}{2} \leq x \leq 1 \quad \text{and} \quad 0 \leq y \leq \frac{1}{2} \right \} .
\end{align}
Let $I:\mathbb{R}^2 \rightarrow \mathbb{R}$ be given by $I(x,y)=x+y-2xy$. Then, a mechanism is $\iota$-LI if and only if $I(x,y) \leq \iota$. Consider the following constrained quadratic optimization problem.
\begin{equation} \label{eq: example independent mechanisms}
\begin{aligned}
& {\text{minimize}}
& & I(x,y) = x+y-2xy \\
& \text{subject to}
& & \frac{1}{2} \leq x \leq 1 ,\\
& & & 0 \leq y \leq \frac{1}{2} .
\end{aligned}
\end{equation}
The solution to this problem is the minimum influence that a nontrivial mechanism can have.

We show that the solution to this problem is the set
\[ S = \left \{ \left( \frac{1}{2},t \right) : t \in \left [0,\frac{1}{2} \right [ \hspace{2pt} \right \} \bigcup \left\{ \left( t,\frac{1}{2} \right) : t \in \left ] \frac{1}{2},1 \right ] \right \} ,\] 
with $I(S)=\frac{1}{2}$. To do this, we look at the direction of fastest decrease, $- \nabla I = (2y-1,2x-1)$. Note that if $(x,y) \in R^*-S$ then $ - ( \nabla I)_x < 0$ and $ - ( \nabla I)_y > 0$. This means that, for every point $(x,y) \in R^*$ the direction $- \nabla I (x,y)$ points towards $S$. Thus, the solution space of the optimization problem $\eqref{eq: example independent mechanisms}$ must be contained in $S$. A direct calculation shows that $I(S)=\frac{1}{2}$, concluding that nontrivial mechanisms cannot have influence smaller than $\frac{1}{2}$. We illustrate this in Fig. \ref{fig: independent low influence is trivial b}.
\end{example}

The lower bound in the example above holds true in general.

\begin{theorem} \label{teo: nontrivial indep not li}
Let $\mathcal{M}: \mathcal{D} \to \mathcal{V}$ be an independent, $\iota$-LI, nontrivial mechanism. Then, $\iota \geq \frac{1}{2}$. Furthermore, this bound is tight.
\end{theorem}

\begin{proof}
We present the full proof in the Appendix. The main idea behind the proof is to characterize the set of nontrivial schemes $R$ and show that the minimum influence in that set is equivalent to the minimum influence in the set $R^*$ of Example $\ref{ex: nontrivial indep not li}$. Thus, $\iota \geq \frac{1}{2}$.
\end{proof}

We note that we can obtain a looser lower-bound on the influence of independent nontrivial mechanisms using a simpler argument. For this, we use the mechanism support size $|\mathcal V|$ and the nontrivial mechanism assumption to find a simple lower bound on $\min_{d \in \mathcal{D}} \max_{v \in [\mathcal{V}]} \Pr[\mathcal M(d)=v]$. However, this bound becomes loose as $|\mathcal V|$ increases.

\begin{proposition}
Let $\mathcal{M}: \mathcal{D} \to \mathcal{V}$ be independent, nontrivial, and $\iota$-LI. Then, $\iota \geq \frac{1}{|\mathcal{V}|^2}$.
\end{proposition}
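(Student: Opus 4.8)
The plan is to prove the weaker bound $\iota \geq \frac{1}{|\mathcal{V}|^2}$ by a direct pigeonhole-style argument applied to the matrix representation $M_{ij} = \Pr[\mathcal{M}(d_i) = v_j]$ at a single pair of neighbors witnessing nontriviality. Since $\mathcal{M}$ is nontrivial, there exist neighbors $d_i \sim d_k$ with $\argmax_v \Pr[\mathcal{M}(d_i) = v] \neq \argmax_v \Pr[\mathcal{M}(d_k) = v]$; fix such a pair and call these two distinct maximizing output values $v_a$ (for $d_i$) and $v_b$ (for $d_k$), so $a \neq b$.

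First I would record the elementary consequences of "being a maximizer": since $M_{ij} \geq 0$ and $\sum_{j=1}^{|\mathcal{V}|} M_{ij} = 1$, the maximum entry in row $i$ satisfies $M_{ia} = \max_j M_{ij} \geq \frac{1}{|\mathcal{V}|}$, and likewise $M_{kb} \geq \frac{1}{|\mathcal{V}|}$. Next I would invoke Proposition \ref{pro: independent low influence}: because $\mathcal{M}$ is $\iota$-LI, for this neighboring pair we have $\sum_{j=1}^{|\mathcal{V}|} M_{ij} M_{kj} \geq 1 - \iota$, equivalently $\iota \geq 1 - \sum_{j=1}^{|\mathcal{V}|} M_{ij} M_{kj} = \sum_{j=1}^{|\mathcal{V}|} M_{ij}(1 - M_{kj})$, using $\sum_j M_{ij} = 1$. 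Now I restrict the sum on the right to the single index $j = a$: since $a \neq b$, the value $M_{ka}$ is not the row-$k$ maximum, so we cannot directly lower bound $1 - M_{ka}$ well; instead I should restrict to $j = b$ and write $\iota \geq M_{ib}(1 - M_{kb})$ — but this is also awkward since $M_{ib}$ need not be large. The cleaner route is: $\iota \geq 1 - \sum_j M_{ij} M_{kj}$, and I bound $\sum_j M_{ij} M_{kj} \leq 1 - M_{ia}(1 - M_{ka}) - \ldots$; rather, the simplest correct estimate is to observe $M_{ia} M_{ka} + M_{ib} M_{kb} \leq \sum_j M_{ij} M_{kj}$ is the wrong direction. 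So I will instead argue directly from $\iota \geq \Pr[\mathcal{M}(d_i) \neq \mathcal{M}(d_k)] \geq \Pr[\mathcal{M}(d_i) = v_a,\ \mathcal{M}(d_k) = v_b] = M_{ia} M_{kb} \geq \frac{1}{|\mathcal{V}|} \cdot \frac{1}{|\mathcal{V}|} = \frac{1}{|\mathcal{V}|^2}$, where the middle equality uses independence and the penultimate inequality uses that the event $\{\mathcal{M}(d_i) = v_a, \mathcal{M}(d_k) = v_b\}$ with $a \neq b$ is contained in $\{\mathcal{M}(d_i) \neq \mathcal{M}(d_k)\}$.

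So the key steps, in order, are: (1) use nontriviality to fix neighbors $d_i \sim d_k$ and distinct maximizers $v_a, v_b$; (2) note each maximizer has probability at least $\frac{1}{|\mathcal{V}|}$ since a maximum of $|\mathcal{V}|$ nonnegative numbers summing to $1$ is at least $\frac{1}{|\mathcal{V}|}$; (3) use independence to factor $\Pr[\mathcal{M}(d_i) = v_a, \mathcal{M}(d_k) = v_b] = M_{ia} M_{kb}$; (4) bound $\Pr[\mathcal{M}(d_i) \neq \mathcal{M}(d_k)] \geq \Pr[\mathcal{M}(d_i) = v_a, \mathcal{M}(d_k) = v_b]$ because $a \neq b$; (5) chain these with the $\iota$-LI hypothesis to conclude $\iota \geq \frac{1}{|\mathcal{V}|^2}$. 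I do not anticipate a genuine obstacle here — the argument is short and self-contained — the only thing to be careful about is step (4), making sure the containment of events goes in the right direction (the joint event of landing on the two \emph{different} specific values is a subset of the disagreement event, not the other way around), and step (2), which is just the averaging bound. This proposition is strictly weaker than Theorem \ref{teo: nontrivial indep not li} (which gives $\iota \geq \frac{1}{2}$), so no tightness claim is needed.
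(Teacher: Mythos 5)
Your final argument is correct and is essentially identical to the paper's proof: both fix a neighboring pair witnessing nontriviality, note that each of the two distinct maximizing values has probability at least $1/|\mathcal{V}|$, factor the joint probability by independence, and lower-bound the disagreement event by the probability of landing on those two distinct values. The exploratory detour through the quadratic $\iota$-LI characterization in your middle paragraph is unnecessary and should be cut, but the route you settle on is exactly the intended one.
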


\begin{proof}
Since $\mathcal{M}$ is nontrivial, there exists neighboring datasets $d_1 \sim d_2$ and two different values $v_1,v_2 \in \mathcal{V}$ such that $\Pr[\mathcal M(d_1) = v_1] \geq 1/|\mathcal{V}|$ and $\Pr[\mathcal M(d_2) = v_2] \geq 1/|\mathcal{V}|$. Then,
\begin{align*}
    \Pr[\mathcal M(d_1) \neq \mathcal M(d_2)] &\geq \Pr[\mathcal M(d_1)=v_1 , \mathcal M(d_2) = v_2] \\
    &= \Pr[ \mathcal M(d_1)=v_1 ] \Pr[ \mathcal M(d_2) = v_2] \\
    &\geq \frac{1}{|\mathcal{V}|} \frac{1}{|\mathcal{V}|} = \frac{1}{|\mathcal{V}|^2} .
\end{align*}
\end{proof}

The bound in Theorem \ref{teo: nontrivial indep not li} is strictly larger than the bound above for $|\mathcal{V}|>2$. The following bound, however, can sometimes outperform the bound in Theorem \ref{teo: nontrivial indep not li} and is not restricted only to nontrivial mechanisms.

\begin{proposition}
Let the mechanism $\mathcal{M}: \mathcal{D} \to \mathcal{V}$ be  independent and $\iota$-LI. Then, 
\[ \iota \geq 1 - \min_{d \in \mathcal{D}} \max_{v \in \mathcal{V}} \Pr[\mathcal M(d)=v] .\]
\end{proposition}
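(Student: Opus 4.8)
The plan is to start from the defining inequality for $\iota$-LI on a fixed pair of neighboring datasets and bound the probability of agreement from above by a single term. First I would fix $d \sim d'$ and write, exactly as in the proof of Proposition~\ref{pro: independent low influence},
\[
\iota \;\geq\; \Pr[\mathcal{M}(d) \neq \mathcal{M}(d')] \;=\; 1 - \sum_{v \in \mathcal{V}} \Pr[\mathcal{M}(d) = v]\,\Pr[\mathcal{M}(d') = v].
\]
The next step is the key estimate: for each $v$, bound $\Pr[\mathcal{M}(d) = v]\,\Pr[\mathcal{M}(d') = v] \le \max_{w} \Pr[\mathcal{M}(d') = w] \cdot \Pr[\mathcal{M}(d) = v]$, and then sum over $v$ using $\sum_v \Pr[\mathcal{M}(d) = v] = 1$. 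This yields $\sum_v \Pr[\mathcal{M}(d)=v]\Pr[\mathcal{M}(d')=v] \le \max_w \Pr[\mathcal{M}(d')=w]$, hence $\iota \ge 1 - \max_w \Pr[\mathcal{M}(d')=w]$. Symmetrically one also gets $\iota \ge 1 - \max_w \Pr[\mathcal{M}(d)=w]$, so in fact $\iota \ge 1 - \min\{\max_w \Pr[\mathcal{M}(d)=w], \max_w \Pr[\mathcal{M}(d')=w]\}$ for this particular neighboring pair.

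The final step is to pass to the global bound. Since the inequality $\iota \ge 1 - \max_{v}\Pr[\mathcal{M}(d)=v]$ holds for the dataset $d$ appearing in any neighboring pair, and since $\mathcal{D}$ is connected (so every $d \in \mathcal{D}$ has at least one neighbor), the bound $\iota \ge 1 - \max_{v}\Pr[\mathcal{M}(d)=v]$ holds for every $d \in \mathcal{D}$; taking the minimum over $d$ gives $\iota \ge 1 - \min_{d \in \mathcal{D}} \max_{v \in \mathcal{V}} \Pr[\mathcal{M}(d)=v]$, which is the claim. Note that independence is not actually used in the displayed chain except to write the joint probability as a product in the first line; one could equally well argue directly from $\Pr[\mathcal{M}(d)\neq\mathcal{M}(d')] \ge \Pr[\mathcal{M}(d)=v] - \Pr[\mathcal{M}(d')=v \text{ and } \mathcal{M}(d)=v]$-type inequalities, but since the statement restricts to independent mechanisms, the product form is the cleanest route.

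I do not anticipate a genuine obstacle here — this is a short estimate. The only point requiring a little care is the direction of the maximum: one must bound the product by pulling out $\max_w \Pr[\mathcal{M}(d')=w]$ (the "other" dataset's mode) so that the remaining sum $\sum_v \Pr[\mathcal{M}(d)=v]$ telescopes to $1$; doing it the naive way (bounding $\Pr[\mathcal{M}(d)=v] \le \max_w \Pr[\mathcal{M}(d)=w]$ and then summing $\Pr[\mathcal{M}(d')=v]$) gives the same bound with the roles swapped, which is why both versions hold and the $\min$ over $d$ is legitimate. A one-line remark that connectedness guarantees every dataset participates in some neighbor relation is worth including for rigor, though it is essentially immediate from the standing assumptions in the problem setup.
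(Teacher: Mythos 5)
Your proof is correct and rests on the same key estimate as the paper's, namely
\[
\sum_{v \in \mathcal{V}} \Pr[\mathcal{M}(d)=v]\,\Pr[\mathcal{M}(d')=v] \;\le\; \min\Bigl(\max_{v}\Pr[\mathcal{M}(d)=v],\ \max_{v}\Pr[\mathcal{M}(d')=v]\Bigr),
\]
applied to a neighboring pair. The one point where you are in fact more careful than the paper is the passage to the global minimum over $\mathcal{D}$: the paper's displayed chain asserts $1-\min\bigl(\max_{j}M_{ij},\max_{j}M_{kj}\bigr) \ge 1-\min_{l}\max_{j}M_{lj}$, which as written goes the wrong way, since a minimum over the pair $\{i,k\}$ is at least the minimum over all of $\mathcal{D}$; your route — connectedness guarantees every $d$ has a neighbor, so $\iota \ge 1-\max_{v}\Pr[\mathcal{M}(d)=v]$ holds for each $d$ separately, and one then takes the best such lower bound — is the correct way to close that step.
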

First, we define a function which measures the local influence of any two datasets.

\begin{definition} \label{def: influence function}
Let $\mathcal{V}$ be a finite set. We define
$I: \mathbb{R}^{|\mathcal{V}| \times |\mathcal{V}|} \rightarrow \mathbb{R}$, such that $I(u,v) = 1 - \sum_{i=1}^{\mathcal{V}} u_i v_i$.
\end{definition}

\begin{proof}
Let $I$ be the function in Definition \ref{def: influence function}. For the independent mechanism $\mathcal{M}$ with matrix representation $M_{ij} = \Pr[\mathcal{M}(d_i)=v_j]$ and rows $M_i$ and $M_k$, we can write
\begin{align*}
    I(M_i,M_k) &= 1 - \sum_{j=1}^{|\mathcal{V}|} M_{ij} M_{kj} \geq 1 - \min \left( \max_{j \in [|\mathcal{V}|]} M_{ij}  , \max_{j \in [|\mathcal{V}|]}  M_{kj} \right) \\
    &\geq 1 - \min_{l \in [|\mathcal{D}|]}  \max_{j \in [|\mathcal{V}|]}  M_{lj} = 1 - \min_{d \in \mathcal{D}} \max_{v \in [\mathcal{V}]} \Pr[\mathcal M(d)=v].
\end{align*}
But by Lemma 1 in the Appendix, $\mathcal{M}$ is $\iota$-LI if and only if $I(M_i,M_k) \leq \iota$ for every $d_i \sim d_k$. Thus,
\[ \iota \geq 1 - \min_{d \in \mathcal{D}} \max_{v \in \mathcal{V}} \Pr[\mathcal M(d)=v] .\]
\end{proof}

This bound can be better than the one in Theorem \ref{teo: nontrivial indep not li}, if $\min_{d \in \mathcal{D}} \max_{v \in \mathcal{V}} \Pr[\mathcal{M}(d)=v] < \frac{1}{2}$. At best, $\min_{d \in \mathcal{D}} \max_{v \in \mathcal{V}} \Pr[\mathcal{M}(d)=v] = \frac{1}{|\mathcal{V}|}$. At worst, it could be arbitrarily close to the trivial bound of $1$. Thus, although it can sometimes outperform the bound in Theorem \ref{teo: nontrivial indep not li}, it cannot guarantee that the mechanism is bounded away from $0$, as Theorem \ref{teo: nontrivial indep not li} does.

\begin{remark}
In \cite{dwork2015preserving}, the authors studied statistical stability of (independent) differentially private mechanisms in dealing with adaptive data queries. The main question addressed in \cite{dwork2015preserving} is how many adaptive statistical queries can one reliably answer using  \emph{the same} $n$ data samples which are drawn i.i.d from an underlying data distribution $\mathcal{P}$.
 
Recall that for a differentially private mechanism, $\Pr[\mathcal{M}(d)\in \mathcal S]$ does not change fast between neighboring datasets as a function of $d$. This fact is used in \cite{dwork2015preserving} to prove that the empirical average of a randomly chosen hypothesis $\phi$ that is adaptively selected \emph{after} observing $\mathcal{M}(d)$ will be close to the true expectation of $\phi$ with \emph{high probability}. In other words, the chance of observing ``unlucky nontypical" datasets is small, where by unlucky we mean the empirical average of a function dependent on $d$ is far away from its true expectation. This can be thought of as a notion of \emph{typical} invariance or stability to particular realizations of the dataset.
 
We remark that our results in this paper are not inconsistent with those in \cite{dwork2015preserving}. The notion of influence in this paper is different from \cite{dwork2015preserving} and is a worst-case measure: in the sense that the bound $\iota$ on the influence must be universally valid for \emph{all} neighboring datasets. Essentially, the fact that $\iota$ for independent mechanisms is bounded away from 0, as shown in Theorem \ref{teo: nontrivial indep not li} means that in order to preserve utility \emph{there exists} \emph{at least two (and possibly many more)} neighboring datasets which are likely to give very different results to the same query. That is, for independent mechanisms there exists ``unlucky nontypical datasets" which if hit by chance, can lead to unreliable statistical measures.

In Section \ref{sec: nontrivial lowinfluence dp schemes}, we show that this issue can be overcome through designing mechanisms on the joint space of datasets. That is, nontrivial mechanisms can be universally low influence if the mechanisms are non-independent.
\end{remark}

\section{Low Influence Mechanisms are Differentially Private} \label{sec: LI implies DP}

In this section, we show that low influence functions are differentially private. We will not restrict ourselves to only independent mechanism, considering joint mechanisms too. We begin by showing that any linear constraint on $\mathfrak{I}$ is also linear on $\mathfrak{M}$.

\begin{proposition} \label{pro: linear in I is linear in M}
If a constraint is linear on $\mathfrak{I}$, it is linear on $\mathfrak{M}$. In particular, both $(\epsilon,\delta)$-DP and $(\epsilon,\delta)$-VDP are linear on $\mathfrak{M}$.
\end{proposition}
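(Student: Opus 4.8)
The plan is to exhibit the linear ``marginalization'' map sending a joint mechanism to the matrix of its single-dataset marginals, and then use the elementary fact that precomposing a linear functional with a linear map yields a linear functional. Concretely, identify $\mathfrak{M}$ with its image in $\mathbb{R}^{|\mathcal{V}|^{|\mathcal{D}|}}$ as in \eqref{eq: space of joint}, identify $\mathfrak{I}$ with its image in $\mathbb{R}^{|\mathcal{D}| \times |\mathcal{V}|}$ via $M_{ij} = \Pr[\mathcal{M}(d_i)=v_j]$, and define $\pi : \mathbb{R}^{|\mathcal{V}|^{|\mathcal{D}|}} \to \mathbb{R}^{|\mathcal{D}| \times |\mathcal{V}|}$ by
\[
\pi(P)_{ij} = \sum_{(w_1,\ldots,w_{|\mathcal{D}|}) \in \mathcal{V}^{|\mathcal{D}|} \,:\, w_i = v_j} P_{w_1,\ldots,w_{|\mathcal{D}|}} .
\]
Each entry of $\pi(P)$ is a sum of coordinates of $P$, so $\pi$ is linear, and for any joint mechanism $P$ the entry $\pi(P)_{ij}$ is precisely the marginal $\Pr[\mathcal{M}(d_i)=v_j]$; this is the identity already used in the proof of Proposition~\ref{pro: surface of independent mechanisms}, and on the independent locus $\pi$ restricts to the identification $\mathfrak{I}_{\mathfrak{M}} \cong \mathfrak{I}$.

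With this in hand I would argue as follows. A linear constraint on $\mathfrak{I}$ is, by definition, one cut out by (in)equalities $\ell(M) \leq b$ with $\ell$ a linear functional on $\mathbb{R}^{|\mathcal{D}| \times |\mathcal{V}|}$ and $b$ a scalar; since such a constraint references the mechanism only through its marginals $M_{ij}$, its canonical extension to joint mechanisms is $\ell(\pi(P)) \leq b$. As $\ell \circ \pi$ is a composition of linear maps it is a linear functional on $\mathbb{R}^{|\mathcal{V}|^{|\mathcal{D}|}}$, so the extended constraint is linear on $\mathfrak{M}$, which proves the first assertion. For the ``in particular'' clause I would recall the earlier proposition showing that, for independent mechanisms, $(\epsilon,\delta)$-DP is the family of inequalities $\sum_{j : v_j \in \mathcal{S}} M_{ij} \leq e^{\epsilon} \sum_{j : v_j \in \mathcal{S}} M_{kj} + \delta$ over all $d_i \sim d_k$ and $\mathcal{S} \subseteq \mathcal{V}$, and $(\epsilon,\delta)$-VDP is the family $M_{ij} \leq e^{\epsilon} M_{kj} + \delta$; by Definitions~\ref{def: dp} and \ref{def: vdp} these are exactly the conditions imposed on a joint mechanism, with the marginals now read off as $\pi(P)_{ij}$. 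Hence each such condition is of the form $\ell(\pi(P)) \leq b$ and, by the first part, linear on $\mathfrak{M}$.

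I do not anticipate a genuine obstacle here: the only real content is the observation that the DP and VDP conditions reference a mechanism solely through its single-dataset marginals, which depend linearly on the joint distribution via $\pi$. The one point deserving a careful sentence is that ``the same constraint'' on $\mathfrak{M}$ means precisely the $\pi$-pullback of the constraint on $\mathfrak{I}$, and this is immediate from Definitions~\ref{def: dp} and \ref{def: vdp} since those definitions involve only $\Pr[\mathcal{M}(d)\in\mathcal{S}]$ and $\Pr[\mathcal{M}(d)=v]$.
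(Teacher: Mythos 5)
Your proof is correct and takes essentially the same route as the paper: both rest on the single observation that each marginal $M_{kj}=\Pr[\mathcal{M}(d_k)=v_j]$ is the sum $\sum_{w: w_k=v_j} P_{w_1,\ldots,w_{|\mathcal{D}|}}$, hence a linear function of the joint probabilities, so any constraint linear in the $M_{kj}$ pulls back to one linear in the $P$'s. Your explicit marginalization map $\pi$ and the composition-of-linear-maps phrasing are just a slightly more formal packaging of the paper's one-line argument.
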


\begin{proof}
Let $P \in \mathfrak{M}$ with $P_{v_1, \ldots, v_{|\mathcal{D}|}} = \Pr[\mathcal{M}(d_1)=v_1, \ldots \mathcal{M}(d_{|\mathcal{D}|})=v_{|\mathcal{D}|}]$ and $M \in \mathfrak{I}$ with $M_{kj}= \Pr[\mathcal{M}(d_k)=v_j]$. Then, 
\[ M_{kj}= \Pr[\mathcal{M}(d_k)=v_j] = \sum_{({w_1,\ldots,w_{|\mathcal{D}|}}) \in \mathcal{V}^{|\mathcal{D}|} : w_k = v_j }  P_{w_1,\ldots,w_{|\mathcal{D}|}} .\]
Thus, each $M_{kj}$ is a linear combination of $P_{v_1, \ldots, v_{|\mathcal{D}|}}$.
\end{proof}

Thus, looking at $\mathfrak{M}$ does not complicate the differential privacy constraints. Indeed, it will also simplify the low influence constraints into linear constraints in the space $\mathfrak{M}$ (which were non-linear in the space $\mathfrak{I}$). Proposition \ref{pro: surface of independent mechanisms} explains why the low influence constraints are linear on $\mathfrak{M}$, but non-linear (quadratic) on $\mathfrak{I}$. This is illustrated in Figure \ref{fig: general to independent} for $|\mathcal{V}| = 2$.

\begin{figure*}[!t]
\begin{subfigure}[t]{0.30\textwidth}
    \centering
    \begin{tikzpicture}

\begin{axis}[width=5cm, height=5cm,stack plots=y,thick,smooth,no markers,xmin=0,xmax=1,ymin=0,ymax=1,zmin=0,zmax=1,xlabel={$x$},
  ylabel={$y$},zlabel={$z$},xtick={0,0.5,1},ytick={0.5,1},ztick={0.5,1},view={60}{30}]

\draw[color=black] (0,0,10) -- (50,0,10);
\draw[color=black] (0,0,20) -- (50,0,20);
\draw[color=black] (0,0,30) -- (50,0,30);
\draw[color=black] (0,0,40) -- (50,0,40);
\draw[color=black] (0,0,50) -- (50,0,50);
\draw[color=black] (0,0,60) -- (40,0,60);
\draw[color=black] (0,0,70) -- (30,0,70);
\draw[color=black] (0,0,80) -- (20,0,80);
\draw[color=black] (0,0,90) -- (10,0,90);

\draw[color=black] (10,0,0) -- (10,0,90);
\draw[color=black] (20,0,0) -- (20,0,80);
\draw[color=black] (30,0,0) -- (30,0,70);
\draw[color=black] (40,0,0) -- (40,0,60);
\draw[color=black] (50,0,0) -- (50,0,50);

\draw[color=black] (0,0,100) -- (50,0,50);
\draw[color=black] (0,10,90) -- (40,10,50);
\draw[color=black] (0,20,80) -- (30,20,50);
\draw[color=black] (0,30,70) -- (20,30,50);
\draw[color=black] (0,40,60) -- (10,40,50);

\draw[color=black] (10,0,90) -- (0,10,90);
\draw[color=black] (20,0,80) -- (0,20,80);
\draw[color=black] (30,0,70) -- (0,30,70);
\draw[color=black] (40,0,60) -- (0,40,60);
\draw[color=black] (50,0,50) -- (0,50,50);

\draw[color=black] (50,0,50) -- (50,0,0);
\draw[color=black] (40,10,50) -- (40,10,0);
\draw[color=black] (30,20,50) -- (30,20,0);
\draw[color=black] (20,30,50) -- (20,30,0);
\draw[color=black] (10,40,50) -- (10,40,0);

\draw[color=black] (50,0,40) -- (0,50,40);
\draw[color=black] (50,0,30) -- (0,50,30);
\draw[color=black] (50,0,20) -- (0,50,20);
\draw[color=black] (50,0,10) -- (0,50,10);

\draw[color=black] (0,0,0) -- (0,0,100) -- (50,0,50) -- (50,0,0) -- cycle;

\draw[color=black] (50,0,0) -- (50,0,50) -- (0,50,50) -- (0,50,0) -- cycle;

\draw[color=black] (0,0,100) -- (50,0,50) -- (0,50,50) -- cycle;

\filldraw[color=blue!80] (0,0,0) -- (0,0,100) -- (50,0,50) -- (50,0,0) -- cycle;

\filldraw[color=blue!60] (50,0,0) -- (50,0,50) -- (0,50,50) -- (0,50,0) -- cycle;

\filldraw[color=blue!40] (0,0,100) -- (50,0,50) -- (0,50,50) -- cycle;

\end{axis}

\end{tikzpicture}
    \caption{Low influence in the space of general mechanisms.}
    \label{fig: general to independent a}
\end{subfigure}
\begin{subfigure}[t]{0.30\textwidth}
    \centering
    \input{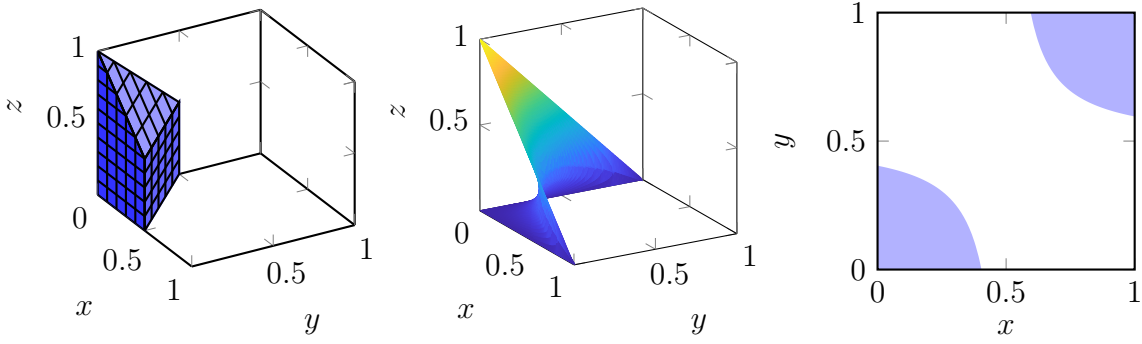}
    \caption{Independence surface in the space of general mechanisms.}
    \label{fig: general to independent b}
\end{subfigure}
\begin{subfigure}[t]{0.30\textwidth}
    \centering
    \begin{tikzpicture}

\begin{axis}[width=5cm, height=5cm,
stack plots=y,thick,smooth,no markers,xmin=0,xmax=1,ymin=0,ymax=1,xlabel={$x$},
  ylabel={$y$},axis on top,xtick={0,0.5,1},ytick={0,0.5,1}]
  
\draw[scale=100,fill=blue!30,color=blue!30,domain=0:0.498,variable=\x] plot ({\x},{(5*\x-2)/(10*\x-5)});

\draw[scale=100,fill=blue!30,color=blue!30,domain=0.503:1,variable=\x] plot ({\x},{(5*\x-2)/(10*\x-5)});

\filldraw[fill=blue!30, draw=blue!30] (0,0) -- (0,40) -- (20,0)-- cycle;

\filldraw[fill=blue!30, draw=blue!30] (100,100) -- (100,60) -- (80,100)-- cycle;
\end{axis}
\end{tikzpicture}
    \caption{Low influence in the space of independent mechanisms.}
    \label{fig: general to independent c}
\end{subfigure}
\caption{For $|\mathcal{V}| = 2$, Figure (a) shows the region for low influence in the space of general (joint) mechanisms. Figure (b) shows the surface of independent mechanisms. Figure (c) can be interpreted as an intersection of (a) and (b), and shows the region for low influence in the space of independent mechanisms.  }
\label{fig: general to independent}
\end{figure*}

\begin{proposition} \label{pro: joint mechanism LI constraint}
Let $\mathcal{M}: \mathcal{D} \to \mathcal{V}$ such that $P_{v_1,\ldots,v_{|\mathcal{D}|}} = \Pr[\mathcal{M}(d_1)=v_1, \ldots, \mathcal{M}(d_{|\mathcal{D}|})=v_{|\mathcal{D}|}]$. Then $\mathcal{M}$ is $\iota$-LI if and only if, for every $d_i \sim d_k$,
\begin{align*}
    \sum_{ (v_1,\ldots, v_{|\mathcal{D}|}) \in \mathcal{V}^{|\mathcal{D}|} : v_i \neq v_k  } P_{v_1,\ldots,v_{|\mathcal{D}|}} \leq \iota .
\end{align*}
In other words, $\iota$-LI is a linear constraint in the space $\mathfrak{M}$.
\end{proposition}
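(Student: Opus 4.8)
The plan is to unwind Definition~\ref{def: low influence mechanism} directly in terms of the joint probabilities $P_{v_1,\ldots,v_{|\mathcal{D}|}}$ and observe that the resulting expression is nothing more than a marginalization of the joint distribution, hence linear in the coordinates of $\mathfrak{M}$. Concretely, I would first fix a neighboring pair $d_i \sim d_k$ and note that, by definition, $\mathcal{M}$ is $\iota$-LI precisely when $\Pr[\mathcal{M}(d_i) \neq \mathcal{M}(d_k)] \leq \iota$ holds for every such pair.

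Next I would rewrite the left-hand side using the characterization of a mechanism as a probability distribution on $\mathcal{V}^{|\mathcal{D}|}$ (as established earlier in the excerpt). The event $\{\mathcal{M}(d_i) \neq \mathcal{M}(d_k)\}$ is the disjoint union, over all outcome tuples $(v_1,\ldots,v_{|\mathcal{D}|}) \in \mathcal{V}^{|\mathcal{D}|}$ with $v_i \neq v_k$, of the elementary events $\{\mathcal{M}(d_1)=v_1, \ldots, \mathcal{M}(d_{|\mathcal{D}|})=v_{|\mathcal{D}|}\}$. Summing their probabilities gives
\begin{align*}
    \Pr[\mathcal{M}(d_i) \neq \mathcal{M}(d_k)] = \sum_{ (v_1,\ldots, v_{|\mathcal{D}|}) \in \mathcal{V}^{|\mathcal{D}|} : v_i \neq v_k } P_{v_1,\ldots,v_{|\mathcal{D}|}} .
\end{align*}
Substituting this identity into the inequality $\Pr[\mathcal{M}(d_i) \neq \mathcal{M}(d_k)] \leq \iota$ yields the claimed condition, and quantifying over all $d_i \sim d_k$ completes the equivalence. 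Finally, since each such constraint is a sum of a fixed subset of the coordinates $P_{v_1,\ldots,v_{|\mathcal{D}|}}$ bounded above by the constant $\iota$, it is a linear (in fact affine half-space) constraint on $\mathfrak{M}$, which gives the last assertion.

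There is no real obstacle here: the only point requiring the slightest care is making sure the marginalization step is stated cleanly — that the elementary events indexed by distinct outcome tuples are disjoint and exhaust the event $\{\mathcal{M}(d_i)\neq\mathcal{M}(d_k)\}$ — but this is immediate from the identification of $\mathfrak{M}$ with distributions on $\mathcal{V}^{|\mathcal{D}|}$. It may be worth contrasting this with Proposition~\ref{pro: independent low influence}, where restricting to $\mathfrak{I}$ forces the analogous quantity to be written via the product $\sum_j M_{ij}M_{kj}$, which is quadratic; the point of the present proposition is precisely that passing to the larger space $\mathfrak{M}$ linearizes the low-influence condition.
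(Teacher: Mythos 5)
Your proposal is correct and follows exactly the same route as the paper's proof: both unwind Definition~\ref{def: low influence mechanism} and express $\Pr[\mathcal{M}(d_i)\neq\mathcal{M}(d_k)]$ as the sum of the joint probabilities $P_{v_1,\ldots,v_{|\mathcal{D}|}}$ over tuples with $v_i\neq v_k$, which immediately yields the linear constraint. The extra care you take in justifying the marginalization step is fine but not needed beyond what the paper already records.
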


\begin{proof}
The mechanism $\mathcal{M}$ is $\iota$-LI if and only if, for any two neighboring datasets $d_i \sim d_k$, 
\begin{align*}
    \iota \geq \Pr [\mathcal{M}(d_i) \neq \mathcal{M}(d_k)] = \sum_{ (v_1,\ldots, v_{|\mathcal{D}|}) \in \mathcal{V}^{|\mathcal{D}|} : v_i \neq v_k  } P_{v_1,\ldots,v_{|\mathcal{D}|}} .
\end{align*}

\end{proof}

We revisit the toy example in the introduction and show that $\iota$-LI implies in $(0,\iota)$-DP.

\begin{example}
Consider the case where $\mathcal{D}$ consists of two neighboring datasets ${d_1 \sim d_2}$, and the output space is binary, i.e. $\mathcal{V} = \{1,2 \}$. A general mechanism $\mathcal{M}: \mathcal{D} \rightarrow \mathcal{V}$ is then determined by the matrix $P=\left( \begin{smallmatrix} P_{11} & P_{12} \\ P_{21} & P_{22} \end{smallmatrix} \right)$, where $P_{ij} = \Pr[\mathcal{M}(d_1)=i , \mathcal{M}(d_2)=j]$. We note that the marginals are given by $\Pr[\mathcal{M}(d_1) = k] = P_{k1}+P_{k2}$ and $\Pr[\mathcal{M}(d_2) = k] = P_{1k}+P_{2k}$. Thus, a simple calculation shows that, $\mathcal{M}$ is $(0,\delta)$-DP if and only if $|P_{12}-P_{21}| \leq \delta$, i.e. if the entries in the anti-diagonal of $P$ are close. By Proposition \ref{pro: joint mechanism LI constraint}, the mechanism $\mathcal{M}$ is $\iota$-LI if and only if $P_{12}+P_{21} \leq \iota$, i.e. if the sum of the entries in the anti-diagonal of $P$ is close to zero. Thus, since $|P_{12}-P_{21}| \leq P_{12}+P_{21} \leq \iota$, it follows that if the mechanism $\mathcal{M}$ is $\iota$-LI, it is also $(0,\iota)$-DP.
\end{example}

The statement in the example above holds true in general. We first prove it for value differential privacy and then combine it with Proposition \ref{pro: vdp equiv dp} to prove the general result.

\begin{proposition} \label{pro: li implies vdp}
Let $\mathcal{M}: \mathcal{D} \to \mathcal{V}$ be an $\iota$-LI mechanism. Then, $\mathcal{M}$ is $(0,\iota)$-VDP.
\end{proposition}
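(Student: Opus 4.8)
The plan is to unfold the marginal probabilities $\Pr[\mathcal{M}(d)=v]$ in terms of the joint distribution and observe that the difference of marginals across neighboring datasets can only be as large as the ``disagreement'' probability $\Pr[\mathcal{M}(d)\neq\mathcal{M}(d')]$, which is exactly what $\iota$-LI controls. Concretely, fix neighboring datasets $d_i \sim d_k$ and a value $v \in \mathcal{V}$. Using the joint representation $P_{v_1,\ldots,v_{|\mathcal{D}|}}$ and the marginalization identity from the proof of Proposition~\ref{pro: surface of independent mechanisms}, namely $\Pr[\mathcal{M}(d_i)=v] = \sum_{(w_1,\ldots,w_{|\mathcal{D}|}): w_i = v} P_{w_1,\ldots,w_{|\mathcal{D}|}}$, I would write both $\Pr[\mathcal{M}(d_i)=v]$ and $\Pr[\mathcal{M}(d_k)=v]$ as sums over configurations. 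The configurations with $w_i = v$ \emph{and} $w_k = v$ contribute to both sums and cancel in the difference, leaving
\[
\Pr[\mathcal{M}(d_i)=v] - \Pr[\mathcal{M}(d_k)=v] = \!\!\sum_{\substack{(w_1,\ldots,w_{|\mathcal{D}|}):\\ w_i = v,\, w_k \neq v}}\!\! P_{w_1,\ldots,w_{|\mathcal{D}|}} \;-\; \!\!\sum_{\substack{(w_1,\ldots,w_{|\mathcal{D}|}):\\ w_i \neq v,\, w_k = v}}\!\! P_{w_1,\ldots,w_{|\mathcal{D}|}}.
\]

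The second sum is a sum of (nonnegative) probabilities, so dropping it only increases the right-hand side; hence $\Pr[\mathcal{M}(d_i)=v] - \Pr[\mathcal{M}(d_k)=v] \leq \sum_{w_i = v,\, w_k \neq v} P_{w_1,\ldots,w_{|\mathcal{D}|}}$. Every term appearing in this last sum also appears in the sum $\sum_{(w_1,\ldots,w_{|\mathcal{D}|}): w_i \neq w_k} P_{w_1,\ldots,w_{|\mathcal{D}|}}$ (since $w_i = v$ and $w_k \neq v$ forces $w_i \neq w_k$), and all terms are nonnegative, so
\[
\Pr[\mathcal{M}(d_i)=v] - \Pr[\mathcal{M}(d_k)=v] \leq \!\!\sum_{\substack{(w_1,\ldots,w_{|\mathcal{D}|}):\\ w_i \neq w_k}}\!\! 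P_{w_1,\ldots,w_{|\mathcal{D}|}} \leq \iota,
\]
where the final inequality is precisely the $\iota$-LI characterization of Proposition~\ref{pro: joint mechanism LI constraint}. Equivalently, and perhaps cleaner to present without coordinates: $\Pr[\mathcal{M}(d_i)=v] - \Pr[\mathcal{M}(d_k)=v] = \Pr[\mathcal{M}(d_i)=v,\,\mathcal{M}(d_k)\neq v] - \Pr[\mathcal{M}(d_i)\neq v,\,\mathcal{M}(d_k)=v] \leq \Pr[\mathcal{M}(d_i)=v,\,\mathcal{M}(d_k)\neq v] \leq \Pr[\mathcal{M}(d_i)\neq\mathcal{M}(d_k)] \leq \iota$. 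Since $v$ and the ordered pair $(d_i,d_k)$ were arbitrary (and the neighborhood relation is symmetric), this gives $\Pr[\mathcal{M}(d)=v] \leq e^{0}\Pr[\mathcal{M}(d')=v] + \iota$ for all neighbors $d\sim d'$ and all $v$, i.e., $\mathcal{M}$ is $(0,\iota)$-VDP.

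There is no real obstacle here; the only thing to be careful about is the bookkeeping in the cancellation step — making sure the "agreement on $v$" configurations cancel exactly and that every surviving positive term is genuinely a disagreement term. The coordinate-free phrasing via $\Pr[\mathcal{M}(d)=v,\mathcal{M}(d')\neq v]$ sidesteps even that, so I would likely present the short event-algebra version as the main argument and mention Figure~\ref{tab: example} for intuition (rows/columns sums versus off-diagonal sums), noting that the reverse implication fails by an easy counterexample.
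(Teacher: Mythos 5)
Your proof is correct and follows essentially the same route as the paper's: both work with the joint distribution of $(\mathcal{M}(d),\mathcal{M}(d'))$, cancel the ``agreement'' mass when subtracting the two marginals of a fixed value $v$, and bound what remains by the total disagreement probability $\Pr[\mathcal{M}(d)\neq\mathcal{M}(d')]\leq\iota$. The only cosmetic difference is that you drop the negative term directly where the paper routes the same bound through $\sum_{i\neq k}|P_{ki}-P_{ik}|$; the content is identical.
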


\begin{proof}
Let $d_1,d_2 \in \mathcal{D}$ be two neighboring databases and $P$ be the joint probability matrix given by $P_{ij} = \Pr[\mathcal{M}(d_1)=v_i , \mathcal{M}(d_2)=v_j]$. Then, for every $k \in [|\mathcal V|]$,
\begin{align*}
    \iota &\geq \Pr[\mathcal{M}(d_1) \neq \mathcal{M}(d_2)] = \sum_{(i,j):i\neq j} P_{ij} \\
    &\geq \sum_{i \neq k} (P_{ki}+P_{ik}) \geq \sum_{i \neq k} | P_{ki} - P_{ik} | \\
    &\geq \left | \sum_{i \neq k} (P_{ki} - P_{ik}) \right | = \left | \sum_{j} P_{kj} - \sum_{i} P_{ik} \right | .
\end{align*}
But $\sum_{j} P_{kj} = \Pr[\mathcal{M}(d_1)=v_k]$ and $\sum_{i} P_{ik} = \Pr[\mathcal{M}(d_2)=v_k]$. Thus, 
\[ \Pr[\mathcal{M}(d_1)=v_k] \leq \Pr[\mathcal{M}(d_2)=v_k] + \iota \quad \text{and} \quad \Pr[\mathcal{M}(d_2)=v_k] \leq \Pr[\mathcal{M}(d_1)=v_k] + \iota ,\] for every $k \in \mathcal{V}$, i.e., $\mathcal{M}$ is $(0,\iota)$-value differentially private.
\end{proof}

\begin{figure*}[t]
\definecolor{blizzardblue}{rgb}{0.67, 0.9, 0.93}
\definecolor{grannysmithapple}{rgb}{0.66, 0.89, 0.63}
\definecolor{lavenderblue}{rgb}{0.8, 0.8, 1.0}
\begin{subfigure}[]{0.45\textwidth}
    \centering
    \begin{tabular}{@{}llll@{}}
$P_{11}$ & \cellcolor{blizzardblue}$P_{12}$ & $P_{13}$ & $P_{14}$ \\
\cellcolor{grannysmithapple}$P_{21}$ & $P_{22}$ & \cellcolor{grannysmithapple}$P_{23}$ & \cellcolor{grannysmithapple}$P_{24}$ \\
$P_{31}$ & \cellcolor{blizzardblue}$P_{32}$ & $P_{33}$ & $P_{34}$ \\
$P_{41}$ & \cellcolor{blizzardblue}$P_{42}$ & $P_{43}$ & $P_{44}$
\end{tabular}
    \caption{$(0,\delta)$-VDP}
    \label{tab:a example}
\end{subfigure}
\begin{subfigure}[]{0.45\textwidth}
    \centering
    \begin{tabular}{@{}llll@{}}
$P_{11}$ & \cellcolor{lavenderblue}$P_{12}$                         & \cellcolor{lavenderblue}$P_{13}$                         & \cellcolor{lavenderblue}$P_{14}$                         \\
\cellcolor{lavenderblue}$P_{21}$                         & $P_{22}$ & \cellcolor{lavenderblue}$P_{23}$                         & \cellcolor{lavenderblue}$P_{24}$                         \\
\cellcolor{lavenderblue}$P_{31}$                         & \cellcolor{lavenderblue}$P_{32}$                         & $P_{33}$ & \cellcolor{lavenderblue}$P_{34}$                         \\
\cellcolor{lavenderblue}$P_{41}$                         & \cellcolor{lavenderblue}$P_{42}$                         & \cellcolor{lavenderblue}$P_{43}$                         & $P_{44}$
\end{tabular}
    \caption{$\iota$-Low Influence}
    \label{tab:b example}
\end{subfigure}
\caption{The main idea behind the proof of Proposition \ref{pro: li implies vdp}. The elements of the matrix signify $P_{ij} = \Pr[\mathcal{M}(d_1)=v_i , \mathcal{M}(d_2)=v_j]$. In Figure (a), $(0,\delta)$-VDP requires the difference between the sum of row $k$ and column $k$ to be small. In Figure (b), $\iota$-LI requires the sum of all non-diagonal elements to be small. The latter implies the former. However, one can easily construct counterexample to show the reverse is generally not true. }
\label{tab: example}
\end{figure*}

\begin{theorem} \label{teo: li implies dp}
Let $\mathcal{M}: \mathcal{D} \to \mathcal{V}$ be an $\iota$-LI mechanism. Then, $\mathcal{M}$ is $(0,\iota(|\mathcal{V}|-1))$-DP.
\end{theorem}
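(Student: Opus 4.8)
The plan is to chain together the two results already established in this section. By Proposition \ref{pro: li implies vdp}, an $\iota$-LI mechanism $\mathcal{M}$ is $(0,\iota)$-VDP, i.e.\ for any neighboring $d \sim d'$ and every single output value $v \in \mathcal{V}$ we have $\Pr[\mathcal{M}(d) = v] \leq e^0 \Pr[\mathcal{M}(d') = v] + \iota$. Then by Proposition \ref{pro: vdp equiv dp}, instantiated with $\epsilon = 0$ and $\delta = \iota$, a $(0,\iota)$-VDP mechanism is $(0,(|\mathcal{V}|-1)\iota)$-DP. Composing these two implications yields exactly the claimed statement.

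Concretely, the first step is to invoke Proposition \ref{pro: li implies vdp} verbatim, which passes from the ``all non-diagonal joint probabilities are small'' characterization (Proposition \ref{pro: joint mechanism LI constraint}) to the per-value marginal bound, using the triangle inequality as in the displayed chain of that proof. The second step is to invoke Proposition \ref{pro: vdp equiv dp}, whose proof simply sums the per-value VDP inequality over the elements of a subset $\mathcal{S} \subseteq \mathcal{V}$ and bounds $|\mathcal{S}|$ by $|\mathcal{V}| - 1$ (the case $\mathcal{S} = \mathcal{V}$ being trivially satisfied). No further computation is needed; the theorem is a one-line corollary of the two propositions.

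There is essentially no obstacle here: both ingredients are already proved in the excerpt, so the only ``work'' is recording the substitution $\epsilon = 0$, $\delta = \iota$ correctly and noting that the DP parameter aggregates multiplicatively in $|\mathcal{V}| - 1$ rather than $|\mathcal{V}|$. If anything deserves a remark, it is that this bound is the price one pays for translating a single joint low-influence constraint into $|\mathcal{V}|$-many marginal constraints and then into the $2^{|\mathcal{V}|}$-many event-based DP constraints; tightness of the $(|\mathcal{V}|-1)$ factor would follow from exhibiting a mechanism attaining equality in both propositions simultaneously, but the theorem as stated only asserts the upper bound and so needs nothing more than the composition above.

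\begin{proof}
By Proposition \ref{pro: li implies vdp}, $\mathcal{M}$ is $(0,\iota)$-VDP. Applying Proposition \ref{pro: vdp equiv dp} with $\epsilon = 0$ and $\delta = \iota$, it follows that $\mathcal{M}$ is $(0,\iota(|\mathcal{V}|-1))$-DP.
\end{proof}
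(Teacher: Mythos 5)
Your proof is correct and is exactly the paper's own argument: the paper also derives the theorem by composing Proposition \ref{pro: li implies vdp} with Proposition \ref{pro: vdp equiv dp}. Nothing is missing.
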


\begin{proof}
Follows directly from Propositions \ref{pro: vdp equiv dp} and \ref{pro: li implies vdp}.
\end{proof}

The idea behind Proposition \ref{pro: li implies vdp} is shown in Fig.~\ref{tab: example}. The $\iota$-LI conditions require that for any two $d_1 \sim d_2$, the sum of non-diagonal elements of the joint probability matrix  $P_{ij} = \Pr[\mathcal{M}(d_1)=v_i, \mathcal{M}(d_2)=v_j]$ be smaller than $\iota$. On the other hand, $(0,\iota)$-VDP conditions require that for all $k \in [|\mathcal V|]$, the difference between the sum of row $k$ and the sum of column $k$ of the same matrix be small. Clearly, if the sum of all non-diagonal elements is small, the difference between the sum of any two subsets of non-diagonal elements is also small. For value differential privacy, it suffices to consider the difference between the sum of row $k$ and column $k$, noting that the diagonal element $P_{kk}$ (which could be large) will cancel out in the difference.

In most practical $(\epsilon,\delta)$-DP scenarios, the value of $\delta$ should be rather small, e.g., in the order of $10^{-3}$. According to Theorem \ref{pro: li implies vdp}, setting $\iota = \frac{\delta}{|\mathcal V|-1}$ when designing a low influence mechanism will ensure the desired $\delta$.
However, this method is only valid for $(\epsilon,\delta)$-DP with $\delta > 0$. Indeed, if one desires $\delta = 0$, i.e. the classical $\epsilon$-DP condition, a mechanism being low-influence is not sufficient to guarantee privacy unless $\iota = 0$.

\section{Useful Low Influence Differentially Private Schemes} \label{sec: nontrivial lowinfluence dp schemes}

Most problems in differential privacy exhibit a tradeoff between privacy and some notion of utility.
This is typically captured by a utility function $U: \mathfrak{M} \rightarrow \mathbb{R}$, which is a function from the space of randomized mechanisms to the real-values which usually captures how well a randomized mechanism approximates a desired output if no privacy mechanism was to be put in place.
Problems in differential privacy can then be formulated as:  $\underset{ \mathcal{M} \in \mathfrak{M} }{\text{maximize }} U(\mathcal{M})$, subject to differential privacy constraint, and potentially other practical constraints. Obviously, the structure of the optimization problem, its computational complexity, and the overall efficacy of the solution are three important considerations when designing differentially private mechanisms.

As shown in Proposition \ref{pro: linear in I is linear in M}, every linear constraint on the space of independent mechanisms $\mathfrak{I}$ is also linear on the space of joint mechanisms $\mathfrak{M}$. In particular, $(\epsilon, \delta)$-DP and $(\epsilon, \delta)$-VDP are linear constraints over both $\mathfrak{I}$ and $\mathfrak{M}$. However, the opposite is not true. As we have seen, low influence constraints are linear over $\mathfrak{M}$, but non-linear over $\mathfrak{I}$. Linear conditions are desired as efficient linear program solvers can be used to find the solution.

This simplicity in the constraints comes at the price of dimensionality. The degrees of freedom in independent mechanisms is $\mathcal{O} (|\mathcal{D}| \times |\mathcal{V}|)$, whereas the degrees of freedom in general joint mechanisms is $\mathcal{O} ({|\mathcal{V}|}^{|\mathcal{D}|})$.
Clearly, the desired linearity and increase in the degrees of freedom come at the cost of computational complexity. 
Solving the tradeoff between performance and complexity in designing joint mechanisms is an interesting topic of research, but is beyond the scope of this paper. 
However, in the remainder of this section we motivate the potential benefits of considering general joint mechanisms. 
We start by a general result on the existence of nontrivial mechanisms with arbitrarily low influence (which according to Theorem \ref{teo: nontrivial indep not li} must necessarily be non-independent).  

\begin{theorem}
There exists nontrivial mechanisms $\mathcal{M}: \mathcal{D} \to \mathcal{V}$ with arbitrarily low influence.
\end{theorem}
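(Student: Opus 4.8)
The plan is to exhibit an explicit joint mechanism, presented — via the earlier characterization of mechanisms as probability distributions over functions $h:\mathcal D\to\mathcal V$ — as a mixture of only three functions, and then to verify by direct computation that it is nontrivial and $\iota$-LI for any prescribed $\iota\in(0,1)$. Since nontriviality is vacuous when $|\mathcal V|=1$, assume $|\mathcal V|\ge 2$ and fix two distinct values $v_0,v_1\in\mathcal V$. Because $\mathcal D$ contains at least two datasets and is connected, fix a dataset $d_*\in\mathcal D$ together with a neighbor $d_{**}\sim d_*$. For the target influence $\iota\in(0,1)$, define $\mathcal M$ to be the distribution on functions that outputs the constant function $h\equiv v_0$ with probability $\tfrac{1-\iota}{2}$, the constant function $h\equiv v_1$ with probability $\tfrac{1-\iota}{2}$, and the function $h_*$ with $h_*(d_*)=v_0$ and $h_*(d)=v_1$ for all $d\neq d_*$ with probability $\iota$. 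These weights are nonnegative and sum to $1$, so this is a valid element of $\mathfrak M$.

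Next I would verify nontriviality by reading off the marginals. At $d_*$ we get $\Pr[\mathcal M(d_*)=v_0]=\tfrac{1-\iota}{2}+\iota=\tfrac{1+\iota}{2}$, $\Pr[\mathcal M(d_*)=v_1]=\tfrac{1-\iota}{2}$, and $\Pr[\mathcal M(d_*)=v]=0$ for every other $v$, so $\argmax_v\Pr[\mathcal M(d_*)=v]=v_0$. At $d_{**}$ (and in fact at every $d\neq d_*$) the roles of $v_0$ and $v_1$ swap, giving $\argmax_v\Pr[\mathcal M(d_{**})=v]=v_1\neq v_0$. Hence the most likely output is not constant across the neighboring pair $d_*\sim d_{**}$, so $\mathcal M$ is nontrivial.

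Then I would check the $\iota$-LI condition. For any neighbors $d\sim d'$, both constant functions agree on $\{d,d'\}$, and $h_*$ disagrees on $\{d,d'\}$ only when exactly one of $d,d'$ equals $d_*$; therefore $\Pr[\mathcal M(d)\neq\mathcal M(d')]\le\iota$ for every neighboring pair, i.e. $\mathcal M$ is $\iota$-LI. Since $\iota\in(0,1)$ was arbitrary, nontrivial mechanisms of arbitrarily low influence exist.

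The main (and essentially only) subtlety — precisely the place where, by Theorem~\ref{teo: nontrivial indep not li}, an independent mechanism would be forced to have $\iota\ge\tfrac12$ — is reconciling a single strict separation constraint (nontriviality across one edge of the neighborhood graph) with the entire family of LI constraints simultaneously. The construction resolves this by routing all the ``disagreement'' through one perturbed coordinate $d_*$ carried with weight $\iota$, so that it loads the influence of only the edges incident to $d_*$ and nothing else, while the bulk of the probability mass sits on constant functions that contribute zero influence everywhere. A non-constructive alternative would invoke the linear-programming description from Proposition~\ref{pro: joint mechanism LI constraint}: the $\iota$-LI polytope contains, with a relative neighborhood, the mechanism supported on constant functions, and this neighborhood meets the (relatively open) nontrivial region; but the explicit mixture above is cleaner and has the advantage of exhibiting the mechanism.
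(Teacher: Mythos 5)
Your proof is correct and is essentially the paper's own construction: the paper likewise mixes the two constant functions (each with weight $\tfrac{1-\alpha}{2}$) with a single function that deviates from a constant at exactly one dataset (with weight $\alpha$), reads off the marginals to get nontriviality, and observes that only edges incident to the perturbed dataset pick up influence $\alpha$. The only cosmetic difference is that you phrase the mechanism explicitly as a distribution over functions $h:\mathcal D\to\mathcal V$ rather than via the coordinates $P_{v_1,\ldots,v_{|\mathcal D|}}$, which the paper has already shown to be equivalent.
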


\begin{proof}
Without loss of generality, let $\mathcal{D}=\{ d_1, \ldots, d_{|\mathcal{D}|} \}$ such that $d_1 \sim d_2$, $\mathcal{V} = \{1, \ldots, |\mathcal{V}| \}$, and choose $0<\alpha <1$. Consider the mechanism $P_{v_1 , \ldots , v_{|\mathcal{D}|}} \in \mathfrak{M}$ such that $P_{1,\ldots,1}=P_{2,\ldots,2}=\frac{1-\alpha}{2}$, $P_{1,2,1,\ldots,1} = \alpha$, and $P_{v_1 , \ldots , v_{|\mathcal{D}|}}=0$ for the other $(v_1 , \ldots , v_{|\mathcal{D}|}) \in \mathcal{V}^{|\mathcal{D}|}$. Then, 
\begin{align*}
    \Pr[\mathcal{M}(d_1)=1] = \sum_{(v_1 , \ldots , v_{|\mathcal{D}|}) \in \mathcal{V}^{|\mathcal{D}|} : v_1 = 1} P_{v_1 , \ldots , v_{|\mathcal{D}|}} = P_{1,\ldots,1} + P_{1,2,1,\ldots,1} = \frac{1}{2} + \frac{\alpha}{2} 
\end{align*}
and
\begin{align*}
    \Pr[\mathcal{M}(d_2)=2] = \sum_{(v_1 , \ldots , v_{|\mathcal{D}|}) \in \mathcal{V}^{|\mathcal{D}|} : v_2 = 2} P_{v_1 , \ldots , v_{|\mathcal{D}|}} = P_{2,\ldots,2} + P_{1,2,1,\ldots,1} = \frac{1}{2} + \frac{\alpha}{2} .
\end{align*}
Since both values are larger than $1/2$, it follows that 
\[ \argmax_{v \in \mathcal{V}} \Pr[\mathcal{M}(d_1) = v] = 1 \neq 2 = \argmax_{v \in \mathcal{V}} \Pr[\mathcal{M}(d_2) = v] .\]
And thus, $\mathcal{M}$ is nontrivial.

We now show that $\mathcal{M}$ is $\alpha$-LI. This follows directly from the fact that
\begin{align*}
    \sum_{(v_1 , \ldots , v_{|\mathcal{D}|}) \in \mathcal{V}^{|\mathcal{D}|} : v_i \neq v_k} P_{v_1 , \ldots , v_{|\mathcal{D}|}} = \left\{\begin{matrix}
P_{1,2,1,\ldots,1} = \alpha & \text{if $v_i = 2$ or $v_k = 2$,} \\ 
0 & \text{otherwise.}
\end{matrix}\right.
\end{align*}
\end{proof}


\begin{example}\label{ex:ind_mechanism}

Let $\mathcal{D}$ consist of two neighboring datasets $d_1 \sim d_2$ and the output set be the binary set $\mathcal{V} = \{ 1,2 \}$. We wish to design a mechanism which does not favor one database over another. This can be achieved through imposing $\Pr[\mathcal{M}(d_1)=1] = \Pr[\mathcal{M}(d_2)=2]$, which we call the balancing constraint. Consider the utility function given by $U(\mathcal{M}) = \Pr[\mathcal{M}(d_1)=1]$.  Restricting ourselves to independent mechanisms we consider the following optimization problem.

\begin{equation*}
\begin{aligned}
& \underset{ \mathcal{M} \in \mathfrak{I} }{\text{maximize}}
& & U(\mathcal{M}) \\
& \text{subject to}
& & \text{$\mathcal{M}$ satisfies $(\epsilon,\delta)$-DP and} \\
& & & \text{$\Pr[\mathcal{M}(d_1)=1] = \Pr[\mathcal{M}(d_2)=2]$} .
\end{aligned}
\end{equation*}

Using the characterization of $\mathfrak{I}$ given in Proposition \ref{def: independent mechanism}, and setting $x = \Pr[\mathcal{M}(d_1)=1] = \Pr[\mathcal{M}(d_2)=2]$, the optimization problem is as follows.

\begin{equation*}
\begin{aligned}
& \underset{ x \in \mathbb{R} }{\text{maximize}}
& & x \\
& \text{subject to}
& & x \leq e^\epsilon (1-x) + \delta \\
& & & (1-x) \leq e^\epsilon x + \delta \\
& & & 0 \leq x \leq 1 .
\end{aligned}
\end{equation*}

The solution to this optimization problem is $x^* = \frac{e^{\epsilon} + \delta}{e^{\epsilon}+1} \geq \frac{1}{2}$, confirming that the mechanism is nontrivial. As we previously saw, the influence of this mechanism is lower bounded by $\frac{1}{2}$. Indeed, the influence is given by 
\[ I(x^*) = 1 - \frac{2(1-\delta)(e^{\epsilon}+\delta)}{(e^{\epsilon}+1)^2} \geq \frac{1}{2} .\]

\end{example}

We now show that if we consider general mechanisms, we can lower the influence. 

\begin{example}\label{ex:joint_mechanism}

Let $\mathcal{D}$ consist of two neighboring datasets $d_1 \sim d_2$ and the output set be the binary set $\mathcal{V} = \{ 1,2 \}$. We consider the utility function given by $U(\mathcal{M}) = \Pr[\mathcal{M}(d_1)=1]$ with the balancing constraint that $\Pr[\mathcal{M}(d_1)=1] = \Pr[\mathcal{M}(d_2)=2]$ as in Example \ref{ex:ind_mechanism}. We now consider general mechanisms.

\begin{equation*}
\begin{aligned}
& \underset{ \mathcal{M} \in \mathfrak{M} }{\text{maximize}}
& & U(\mathcal{M}) \\
& \text{subject to}
& & \text{$\mathcal{M}$ satisfies $(\epsilon,\delta)$-DP and} \\
& & & \text{$\Pr[\mathcal{M}(d_1)=1] = \Pr[\mathcal{M}(d_2)=2]$} .
\end{aligned}
\end{equation*}
Using the characterization of $\mathfrak{M}$ given in Proposition \ref{def: joint mechanism}, and setting $x = \Pr[\mathcal{M}(d_1)=1 , \mathcal{M}(d_2)=2]$ and $y = \Pr[\mathcal{M}(d_1)=2 , \mathcal{M}(d_2)=1]$, the optimization problem is as follows.
\begin{equation*}
\begin{aligned}
& \underset{ (x,y) \in \mathbb{R}^2 }{\text{maximize}}
& & \frac{1+x-y}{2} \\
& \text{subject to}
& & \frac{1+x-y}{2} \leq e^\epsilon \frac{1+y-x}{2} + \delta \\
& & & \frac{1+y-x}{2} \leq e^\epsilon \frac{1+x-y}{2} + \delta \\
& & & 0 \leq x \leq 1 \\
& & & 0 \leq y \leq 1 \\
& & & 0 \leq x+y \leq 1 .
\end{aligned}
\end{equation*}

If we also minimize the influence, given by $I(x,y)=x+y$, we get the solution $x^* = \frac{e^\epsilon +2\delta -1}{e^\epsilon +1}$ and $y^*=0$ with influence
\[ I(x^*,y^*)= \frac{e^\epsilon +2\delta -1}{e^\epsilon +1} \leq 1 - \frac{2(1-\delta)(e^\epsilon+\delta)}{(e^\epsilon+1)^2} .\] It can be verified that the above $x^*$ and $y^*$ ensure nontrivial mechanisms for all $\epsilon \geq 0$ and $0 \leq \delta \leq 1$. Note that for $\epsilon=0$ and $\delta=0$, $I(x,y) = 0$. Thus, in  this case, the influence is not lower bounded.
\end{example}

\begin{figure}
    \centering
    \begin{tikzpicture}
\pgfplotsset{compat=1.14}
\definecolor{ao}{rgb}{0.13, 0.55, 0.13}
\pgfplotsset{set layers}
\begin{axis}[width=5cm, height=5cm,thick,smooth,no markers,xmin=0.5,xmax=1,ymin=0,ymax=1,axis on top,xtick={0,0.5,0.75,1},ytick={0,0.5,1},ylabel near ticks,xlabel near ticks,samples=200,
scale only axis,
axis y line*=left,
xlabel={Utility $U(\mathcal{M})$},
ylabel style = {align=center},
ylabel={Privacy level $\epsilon$ \ref{pgfplots:plot1}},
]
\addplot[blue,ultra thick] {ln(x/(1-x)};
\label{pgfplots:plot1}
\end{axis}
\begin{axis}[width=5cm, height=5cm,thick,smooth,no markers,xmin=0.5,xmax=1,ymin=0,ymax=1,axis on top,xtick={0,0.5,0.75,1},ytick={0,0.5,1},xlabel near ticks,samples=200,
scale only axis,
axis y line*=right,
axis x line=none,
ylabel style = {align=center},
ylabel={Independent - Influence\ref{pgfplots:plot2} \\ Joint - Influence\ref{pgfplots:plot3}},
]
\addplot [red, ultra thick] {2*x*x-2*x+1};
\label{pgfplots:plot2}
\addplot [color=ao, , ultra thick] {2*x-1};
\label{pgfplots:plot3}
\end{axis}
\end{tikzpicture}
    \caption{The privacy-utility-influence tradeoff for joint and independent mechanisms. While independent mechanisms are sufficient to achieve the best privacy-utility tradeoff, they fall short in terms of influence -- for any utility level, there is a joint scheme which achieves the same privacy and a lower influence.}
    \label{fig:tradeoffs}
\end{figure}
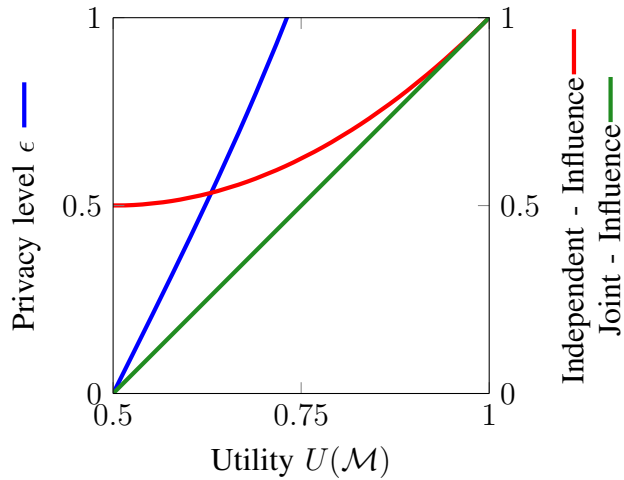

In Figure \ref{fig:tradeoffs} we show the optimal tradeoffs between $(\epsilon,0)$-DP, the utility $U(\mathcal{M})$, and the influence $\iota$ of joint, and independent randomized mechanisms. Note that this tradeoff can be obtained by letting $\delta = 0$ in the solutions in Examples~\ref{ex:ind_mechanism} and \ref{ex:joint_mechanism}, and by then varying the value of $\epsilon$.
As can be seen in the figure, both independent and joint mechanisms exhibit the same privacy-utility tradeoff, i.e. for any $\epsilon$, there is an $(\epsilon,0)$-DP independent mechanism that achieves the same utility as the best $(\epsilon,0)$-DP joint mechanism.
On the other hand, joint mechanisms achieve a strictly better optimal tradeoff in terms of their influence $\iota$.
That is, while independent mechanisms are sufficient to achieve the optimal privacy (captured by the parameter $\epsilon$ in the $(\epsilon,0)$-DP guarantee) and utility tradeoff, they fall short at providing the best influence level $\iota$.

\section{Concluding Remarks}
The study in this paper reveals the interesting interactions between differential privacy and utility, on the one hand, and mechanism influence and independence on the other. In the independent differentially private mechanisms prevalent in the literature, the additive noise is independent across input databases. Such independence enables relatively easy design of the mechanism in terms of the privacy-utility tradeoff. While such mechanisms are casually understood to offer ``low influence" of an individual on the query output, we show that they are not necessarily statistically low influence in a formal sense. Even further, we show that if one were to enforce an arbitrarily low influence on an independent differentially private mechanism, a heavy price would have to be paid: the mechanism becomes trivial (statistically-speaking constant) and hence loses any practical utility. The more general class of joint mechanisms not only allows differential privacy, low influence, and utility to be jointly achieved, they also make the optimization of influence linear over the mechanism space. However, this increases the dimension of the mechanism space over which such optimization takes place.

There are several interesting directions for future research. In \cite{7498650}, the authors studied the relation between individual \emph{identifiability} and differential privacy. Roughly speaking, identifiability aims to measures the guessing capability of an adversary who knows all  elements of the dataset $d$ that are common with its neighboring datasets and wants to guess the remaining unknown element of $d$ after observing the mechanism output $\mathcal{M}(d)$.  Identifiability level, denoted by $\epsilon_d$, of a randomized mechanism depends on the input distribution of elements in $d$ and the $\epsilon$ level in differential privacy. In \cite{cuff:2016}, the authors studied the \emph{worst-case} mutual information (MI) leakage between the $i$-th entry of dataset $d$ and the randomized mechanism output $\mathcal{M}(d)$, given the knowledge of dataset entries excluding $i$ and when maximized over all possible distributions of dataset entries. They proved that such defined $\epsilon_{m}$-MI privacy level is weaker than $(\epsilon,0)$-DP, but does imply $(\epsilon, \sqrt{2\epsilon_m})$-DP for any $\epsilon, \epsilon_m \geq 0$. Exploring the relation between low influence, $\epsilon_d$-identifiability and $\epsilon_m$-MI privacy is an interesting research direction.

Another interesting direction for future research will be systematic low-complexity design of non-independent, high-utility, low-influence, and differentially-private mechanisms. In addition, it is worthwhile to formalize the counterpart of our results for continuous-valued queries and mechanisms. Finally, we remark that the general framework of input-dependent joint mechanisms that we studied in this paper may be of value in time-varying systems where data changes over time according to an inherent dynamical model or is gradually acquired and released (e.g., in social networks or health systems), see for example \cite{dynamical,Koufogiannis_Han_Pappas_2017}. We propose further investigation of a formal connection as future research.

\appendix

In this Appendix we prove Propositions 1, 2, and  Theorem 1 of the main text. 

\begin{propositionapp} 
It holds that,
\begin{multline*}
    \mathfrak{I} \cong \Bigg \{ P_{v_1,\ldots,v_{|\mathcal{D}|}} \in \mathfrak{M} : P_{v_1,\ldots,v_{|\mathcal{D}|}} = \prod_{i=1}^{|\mathcal{D}|} \sum_{({w_1,\ldots,w_{|\mathcal{D}|}}) \in \mathcal{V}^{|\mathcal{D}|} : w_i = v_i }  P_{w_1,\ldots,w_{|\mathcal{D}|}} , \\ \text{for every $(v_1,\ldots,v_{|\mathcal{D}|}) \in \mathcal{V}^{|\mathcal{D}|}$} \Bigg \} .
\end{multline*}
In other words, $\mathfrak{I}$ is a polynomial of degree $|\mathcal{D}|$ in $\mathbb{R}^{{|\mathcal{V}|}^{|\mathcal{D}|}}$.
\end{propositionapp}

\begin{proof}
Let $P \in \mathfrak{M}$ where $P_{v_1, \ldots, v_{|\mathcal{D}|}} = \Pr[\mathcal{M}(d_1)=v_1, \ldots \mathcal{M}(d_{|\mathcal{D}|})=v_{|\mathcal{D}|}]$. Then, 
\[ \Pr[\mathcal{M}(d_k)=v] = \sum_{({w_1,\ldots,w_{|\mathcal{D}|}}) \in \mathcal{V}^{|\mathcal{D}|} : w_k = v }  P_{w_1,\ldots,w_{|\mathcal{D}|}} .\]
Thus, the mechanism is independent if and only if
\[ P_{v_1,\ldots,v_{|\mathcal{D}|}} = \prod_{i=1}^{|\mathcal{D}|} \sum_{({w_1,\ldots,w_{|\mathcal{D}|}}) \in \mathcal{V}^{|\mathcal{D}|} : w_i = v_i }  P_{w_1,\ldots,w_{|\mathcal{D}|}} , \\ \text{for every $(v_1,\ldots,v_{|\mathcal{D}|}) \in \mathcal{V}^{|\mathcal{D}|}$} .\]
\end{proof}

\begin{propositionapp} \label{pro: mech prob}
A randomized mechanism is equivalent to a probability distribution on the space of all functions $h: \mathcal{D} \rightarrow \mathcal{V}$.
\end{propositionapp}

\begin{proof}
The space of functions from $h: \mathcal{D} \rightarrow \mathcal{V}$ is isomorphic to the space $\mathcal{V}^{|\mathcal{D}|}$. Indeed, if we order $\mathcal{D}=\{d_1,\ldots,d_{|\mathcal{D}|} \}$, then each vector $(v_1, \ldots, v_{|\mathcal{D}|}) \in \mathcal{V}^{|\mathcal{D}|}$ corresponds to the function $h: \mathcal{D} \rightarrow \mathcal{V}$ such that $h(d_1) = v_1 , \ldots, h(d_{|\mathcal{D}|})= v_{|\mathcal{D}|}$. By Equation 1 in the main text, every $P \in \mathfrak{M}$ is a probability distribution on $\mathcal{V}^{|\mathcal{D}|}$. Thus, using the above equivalence, every mechanism $P \in \mathfrak{M}$ is a probability distribution on the space of functions from $h: \mathcal{D} \rightarrow \mathcal{V}$.
\end{proof}

We now prove Theorem 1. To do this, we present some preliminary definitions and lemmas. First, recall Definition 9 of the main text $I(u,v) = 1 - \sum_{i=1}^{\mathcal{V}} u_i v_i$, which can be used to measure the local influence of any two datasets. We relate this function to $\iota$-LI for independent mechanisms in the following lemma.


\begin{lemma} \label{lem: relate I and iota}
Let $\mathcal{M}:\mathcal{D}\rightarrow\mathcal{V}$ be an independent mechanism and denote $M_{ij}=\Pr[\mathcal{M}(d_i)=v_j]$. Then, $\mathcal{M}$ is $\iota$-LI if and only if $I(M_i,M_k) \leq \iota$ for every $d_i \sim d_k$.
\end{lemma}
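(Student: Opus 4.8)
The plan is to unwind both sides of the claimed equivalence through the definition of $\iota$-LI (Definition \ref{def: low influence mechanism}) and the definition of an independent mechanism (Definition \ref{def: independent mechanism}), reducing everything to the elementary computation of $\Pr[\mathcal{M}(d_i)\neq\mathcal{M}(d_k)]$ for a fixed neighboring pair. Indeed this is essentially the content already extracted in the proof of Proposition \ref{pro: independent low influence}; the lemma merely repackages it in terms of the function $I$ of Definition \ref{def: influence function}, so the proof will be short.

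First I would fix an arbitrary neighboring pair $d_i\sim d_k$ and write $M_i=(M_{i1},\dots,M_{i|\mathcal{V}|})$ and $M_k=(M_{k1},\dots,M_{k|\mathcal{V}|})$ for the corresponding rows of the stochastic matrix $M$. Using that $\mathcal{M}$ is independent — so that the random variables $\mathcal{M}(d_i)$ and $\mathcal{M}(d_k)$ are (mutually) independent — I would compute
\begin{align*}
\Pr[\mathcal{M}(d_i)\neq\mathcal{M}(d_k)] &= 1-\Pr[\mathcal{M}(d_i)=\mathcal{M}(d_k)] \\
&= 1-\sum_{j=1}^{|\mathcal{V}|}\Pr[\mathcal{M}(d_i)=v_j,\,\mathcal{M}(d_k)=v_j] \\
&= 1-\sum_{j=1}^{|\mathcal{V}|}\Pr[\mathcal{M}(d_i)=v_j]\,\Pr[\mathcal{M}(d_k)=v_j] \\
&= 1-\sum_{j=1}^{|\mathcal{V}|}M_{ij}M_{kj} = I(M_i,M_k),
\end{align*}
where the last equality is just Definition \ref{def: influence function}.

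Given this identity, the equivalence is immediate: by Definition \ref{def: low influence mechanism}, $\mathcal{M}$ is $\iota$-LI precisely when $\Pr[\mathcal{M}(d)\neq\mathcal{M}(d')]\leq\iota$ for every neighboring pair $d\sim d'$, and by the display above this condition reads exactly $I(M_i,M_k)\leq\iota$ for every $d_i\sim d_k$. I would present both directions in one line, since the statements are literally the same after substitution. The only point requiring any care — and it is the "hard part" only in the sense of being the one place a reader could object — is the step that invokes independence to factor the joint probability $\Pr[\mathcal{M}(d_i)=v_j,\mathcal{M}(d_k)=v_j]$ into the product of marginals; I would flag explicitly that this is where the hypothesis $\mathcal{M}\in\mathfrak{I}$ is used, and that it is exactly this step that fails for general joint mechanisms (cf.\ Proposition \ref{pro: joint mechanism LI constraint}).
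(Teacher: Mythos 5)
Your proposal is correct and follows essentially the same route as the paper: the paper's proof likewise reduces the lemma to the computation $\Pr[\mathcal{M}(d_i)\neq\mathcal{M}(d_k)]=1-\sum_{j}M_{ij}M_{kj}=I(M_i,M_k)$, using independence to factor the joint probability (via Proposition \ref{pro: independent low influence}). Your explicit flagging of where the independence hypothesis enters is a nice touch but does not change the argument.
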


\begin{proof}
This is a direct restatement of Proposition 5 in the main text.
\end{proof}

It follows from the lemma that a lower bound on $I$ for two neighboring datasets gives us a lower bound on $\iota$.

We now define generalized versions of the sets $R^*$ and $S$ in Example 1 of the main text.

\begin{definition} \label{def: R}
Let $\mathcal{V}$ be a finite set. We define
\begin{multline*}
     R = \bigg \{ (u,v) \in \mathbb{R}^{|\mathcal{V}| \times |\mathcal{V}|} : \max_{i \in [|\mathcal{V}|]} u_i = u_1 , \max_{i \in [|\mathcal{V}|]} v_i = v_2 , \sum_{i=1}^{|\mathcal{V}|} u_i = \sum_{i=1}^{|\mathcal{V}|} v_i = 1 , \\ 0 \leq u_i \leq 1 , 0 \leq v_i \leq 1 , \quad \text{for every $i \in [|\mathcal{V}|]$} \bigg \} .
\end{multline*}
\end{definition}

This set corresponds to nontrivial independent mechanisms in the following way.

\begin{lemma} \label{lem: nontrivial mechanisms and R}
Let $\mathcal{M}:\mathcal{D}\rightarrow\mathcal{V}$ be a nontrivial independent mechanism and $M_{ij} = \Pr[M(d_i)=v_j]$. Then, there exist two neighboring datasets $d_i \sim d_j$ such that $(M_i,M_j) \in R$.
\end{lemma}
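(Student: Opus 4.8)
The plan is to read the desired pair straight off the nontriviality hypothesis, after a harmless relabeling of the output alphabet. First I would unpack the definition of a nontrivial mechanism: there exist neighbors $d_i \sim d_k$ for which the sets of most-likely outputs differ. Write $S_i = \argmax_v \Pr[\mathcal M(d_i)=v]$ and $S_k = \argmax_v \Pr[\mathcal M(d_k)=v]$ for these nonempty sets. Since $S_i \neq S_k$, at least one fails to contain the other; without loss of generality $S_i \not\subseteq S_k$, so I can choose $a \in S_i \setminus S_k$ and any $b \in S_k$. Because $a \notin S_k$ while $b \in S_k$, the elements $a$ and $b$ are distinct. (This is the one place to be careful: when the maxima are attained with ties, ``the argmax'' is a set, so one must extract distinct witnesses $a,b$ in this way rather than naively comparing single maximizers.)

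Next I would perform the relabeling. The set $R$ of Definition \ref{def: R} is stated relative to a fixed enumeration $\mathcal V = \{v_1,\dots,v_{|\mathcal V|}\}$, and neither independence, the low-influence condition, nor the function $I$ of Definition \ref{def: influence function} is affected by permuting the elements of $\mathcal V$. Hence I may assume the enumeration is chosen with $v_1 = a$ and $v_2 = b$. With this choice the row $M_i = (\Pr[\mathcal M(d_i)=v_1],\dots,\Pr[\mathcal M(d_i)=v_{|\mathcal V|}])$ attains its maximum at coordinate $1$ (since $a\in S_i$), and $M_k$ attains its maximum at coordinate $2$ (since $b\in S_k$); these are exactly the first two constraints defining $R$.

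Finally I would check that the remaining constraints of $R$ hold automatically: each row $M_\ell$ of the stochastic matrix associated to the independent mechanism $\mathcal M$ satisfies $M_{\ell j}\ge 0$ and $\sum_{j} M_{\ell j}=1$, so $0\le M_{\ell j}\le 1$ and the normalization conditions hold for both $M_i$ and $M_k$. Therefore $(M_i,M_k)\in R$ (with the relabeled enumeration), and $d_i\sim d_k$ are the required neighbors. I do not expect a genuine obstacle here; the lemma is essentially a coordinate-adapted restatement of nontriviality, matching the sets $R^*$ and $S$ of Example \ref{ex: nontrivial indep not li}, and the subsequent theorem will proceed by lower-bounding $I$ on $R$ exactly as was done there for $R^*$.
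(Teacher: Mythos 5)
Your proof is correct and takes essentially the same route as the paper: use nontriviality to find neighbors whose most likely outputs differ, relabel the output alphabet so those outputs become $v_1$ and $v_2$, and observe that the stochasticity of the rows gives the remaining constraints of $R$. Your explicit handling of ties in the $\argmax$ (extracting distinct witnesses $a\in S_i\setminus S_k$ and $b\in S_k$) is in fact slightly more careful than the paper's one-line argument, which treats the $\argmax$ as a single element.
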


\begin{proof}
Since $\mathcal{M}$ is nontrivial, there exist two neighboring datasets $d_i \sim d_j$ such that $$\argmax_{v \in \mathcal{V}} \Pr[M(d_{i})=v] \neq \argmax_{v \in \mathcal{V}} \Pr[M(d_{j})=v].$$ We order $\mathcal{V} = \{ v_1 , \ldots, v_{|V|} \}$ such that $$v_1 = \argmax_{v \in \mathcal{V}} \Pr[M(d_i)=v], \quad v_2 = \argmax_{v \in \mathcal{V}} \Pr[M(d_j)=v].$$ Then, by construction, $(M_i,M_j) \in R$.
\end{proof}

Thus, a lower bound on $I$ subject to $R$ would give us a lower bound on $\iota$ for all nontrivial independent mechanisms. To do this, we will consider the following set.

\begin{definition}
Let $R$ be as in Definition \ref{def: R}. We define
\begin{align*}
    S = \{ (u,v) \in R : u_i=v_i=0 \quad \text{for every $i \geq 3$} \} .
\end{align*}
\end{definition}

We show in Lemma $\ref{lem: I(u^*,v^*) < I(u,v)}$ that the minimum of $I$ over $R$ is in $S$. To do this we need the following construction.

\begin{definition}
Let $(u,v)\in R$, $S_u = \sum_{i=3}^{|\mathcal{V}|} u_i$, $S_v = \sum_{i=3}^{|\mathcal{V}|} v_i$. We define $u^* \in \mathbb{R}^{|\mathcal{V}|}$ such that $u^*_1 = u_1 + \frac{S_u}{2}$, $u^*_2 = u_2 + \frac{S_u}{2}$, and $u^*_i = 0$ for $i\in |\mathcal{V}|-\{1,2\}$. We also define $v^* \in \mathbb{R}^{|\mathcal{V}|}$ such that $v^*_1 = v_1 + \frac{S_v}{2}$, $v^*_2 = v_2 + \frac{S_v}{2}$, and $v^*_i = 0$ for $i\in |\mathcal{V}|-\{1,2\}$. 
\end{definition}

Thus, for any $(u,v)\in R$ we can construct the point $(u^*,v^*)$.

\begin{lemma}
Let $(u,v)\in R$. Then, $(u^*,v^*) \in S \subseteq R$.
\end{lemma}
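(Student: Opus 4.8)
The plan is to verify directly that $(u^*,v^*)$ satisfies every defining condition of $S$, which amounts to checking membership in $R$ together with the vanishing of the coordinates indexed $i\ge 3$. The latter holds by construction of $u^*$ and $v^*$, so the entire content is in checking the four constraints that define $R$. Throughout I read the quantity added to the first two coordinates as $\sigma_u=\sum_{i=3}^{|\mathcal{V}|}u_i$ (and $\sigma_v$ analogously), noting that $\sigma_u\ge 0$ and $\sigma_v\ge 0$ because every $u_i,v_i\ge 0$ from $(u,v)\in R$; if $|\mathcal{V}|=2$ these sums are empty and equal $0$, and the argument degenerates harmlessly.

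First I would handle the normalization. Since $(u,v)\in R$ gives $\sum_{i=1}^{|\mathcal{V}|}u_i=1$, we get $\sum_i u^*_i=u^*_1+u^*_2=(u_1+\sigma_u/2)+(u_2+\sigma_u/2)=u_1+u_2+\sigma_u=\sum_{i=1}^{|\mathcal{V}|}u_i=1$, and the identical computation yields $\sum_i v^*_i=1$.

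Next I would check the argmax conditions. Because $(u,v)\in R$ forces $u_1=\max_i u_i\ge u_2$ and each $u_i\ge 0$, adding the common nonnegative quantity $\sigma_u/2$ to coordinates $1$ and $2$ preserves $u^*_1\ge u^*_2$, while $u^*_1=u_1+\sigma_u/2\ge 0=u^*_i$ for $i\ge 3$; hence $\max_i u^*_i=u^*_1$, and symmetrically $\max_i v^*_i=v^*_2$. For the box constraints, the coordinates $i\ge 3$ equal $0\in[0,1]$; and for $j\in\{1,2\}$ we have $u^*_j\ge 0$ (since $u_j\ge 0$ and $\sigma_u\ge 0$) and $u^*_j\le u^*_1+u^*_2=1$ (since the other coordinate among $u^*_1,u^*_2$ is nonnegative), so $u^*_j\in[0,1]$, and likewise $v^*_j\in[0,1]$.

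Combining these, $(u^*,v^*)\in R$ and all its coordinates with index $i\ge 3$ vanish, so $(u^*,v^*)\in S$ by definition of $S$; the inclusion $S\subseteq R$ is immediate from the definition. There is no genuine obstacle in this lemma — it is pure bookkeeping — and the only point requiring a moment's care is that $\sigma_u/2$ is added to both $u_1$ and $u_2$, which is exactly what keeps the sum equal to $1$ while leaving the identity of the maximizing coordinate unchanged.
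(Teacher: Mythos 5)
Your proof is correct and follows the same direct-verification route as the paper's: show the sum is preserved, the maximizing coordinate is unchanged since the same nonnegative quantity $\sigma_u/2$ (resp.\ $\sigma_v/2$) is added to the first two coordinates, and the remaining coordinates vanish. You are in fact somewhat more complete than the paper's own argument, which omits the explicit check of the box constraints, so nothing is missing.
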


\begin{proof}
Since $u_1 \geq u_2$, it follows that 
\[ u^*_1 = u_1 + \frac{S_u}{2} \geq u_2 + \frac{S_u}{2} = u^*_2 \geq u_i = 0 \quad \text{for every $i \in |\mathcal{V}|-\{1,2\}$}. \]
Also, $\sum_{i=1}^{|\mathcal{V}|} u_i = u_1 + \frac{S_u}{2} + u_2 + \frac{S_u}{2} =1$. The same arguments hold for $v^*$. Thus, $(u^*,v^*) \in \mathbb{R}^{|\mathcal{V}| \times |\mathcal{V}|}$.
\end{proof}

We now show that the minimum of $I$ over $R$ is in $S$.

\begin{lemma} \label{lem: I(u^*,v^*) < I(u,v)}
Let $(u,v)\in R$. Then, $I(u^*,v^*) \leq I(u,v)$. And therefore, 
\[ \min_{(u,v) \in R} I(u,v) = \min_{(u,v) \in S} I(u,v) .\]
\end{lemma}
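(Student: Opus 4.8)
The plan is to rephrase the inequality $I(u^*,v^*)\le I(u,v)$ in terms of inner products: since $I(u,v)=1-\sum_i u_iv_i$ by Definition~\ref{def: influence function}, the claim is equivalent to $\sum_i u^*_iv^*_i \ge \sum_i u_iv_i$. First I would substitute the definitions $u^*_1=u_1+\tfrac{\sigma_u}{2}$, $u^*_2=u_2+\tfrac{\sigma_u}{2}$, $u^*_i=0$ for $i\ge 3$, and likewise for $v^*$, and expand:
\[
\sum_i u^*_iv^*_i = u_1v_1+u_2v_2+\tfrac12(u_1+u_2)\sigma_v+\tfrac12(v_1+v_2)\sigma_u+\tfrac12\sigma_u\sigma_v .
\]
Since $\sum_i u_iv_i = u_1v_1+u_2v_2+\sum_{i\ge 3}u_iv_i$, the inequality to prove collapses to
\[
\tfrac12(u_1+u_2)\sigma_v+\tfrac12(v_1+v_2)\sigma_u+\tfrac12\sigma_u\sigma_v \ \ge\ \sum_{i\ge 3}u_iv_i .
\]

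The next step is to bound the tail sum on the right using the defining properties of $R$, namely $u_1=\max_i u_i$, $v_2=\max_i v_i$, and $u_i,v_i\ge 0$. The key observation is that for each $i\ge 3$ one has $u_i(v_i-v_2)\le 0\le v_i(u_1-u_i)$, which rearranges to the symmetric bound $u_iv_i\le\tfrac12(u_iv_2+v_iu_1)$. Summing over $i\ge 3$ yields $\sum_{i\ge 3}u_iv_i\le\tfrac12(v_2\sigma_u+u_1\sigma_v)$. On the other hand, discarding the nonnegative quantities $u_2\sigma_v$, $v_1\sigma_u$, and $\sigma_u\sigma_v$ from the left-hand side of the reduced inequality leaves exactly $\tfrac12 u_1\sigma_v+\tfrac12 v_2\sigma_u$, which dominates the bound just obtained. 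This proves $I(u^*,v^*)\le I(u,v)$.

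For the displayed consequence $\min_{(u,v)\in R}I(u,v)=\min_{(u,v)\in S}I(u,v)$, I would argue both directions: $S\subseteq R$ gives $\min_R I\le\min_S I$ immediately, while for any $(u,v)\in R$ the companion point $(u^*,v^*)$ lies in $S$ (by the preceding lemma) and satisfies $I(u^*,v^*)\le I(u,v)$, so no value attained on $R$ undercuts all values attained on $S$; hence $\min_R I\ge\min_S I$. Since $S$ is compact and $I$ is continuous, both minima are attained and coincide.

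The computations are routine, so I do not anticipate a real obstacle; the one point requiring care is the choice of termwise estimate for $\sum_{i\ge 3}u_iv_i$. The crude bounds $\sum_{i\ge 3}u_iv_i\le u_1\sigma_v$ and $\sum_{i\ge 3}u_iv_i\le v_2\sigma_u$ are each individually too weak — either one can by itself exceed the full left-hand side of the reduced inequality — so it is essential to split the mass symmetrically via $u_iv_i\le\tfrac12(u_iv_2+v_iu_1)$ and match the two halves against the separate pieces $\tfrac12 u_1\sigma_v$ and $\tfrac12 v_2\sigma_u$ on the left. Once that symmetric split is in place, the remainder is bookkeeping.
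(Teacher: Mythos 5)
Your proof is correct and follows essentially the same route as the paper's: both expand $\sum_i u^*_i v^*_i$, discard the nonnegative cross terms $u_2\sigma_v$, $v_1\sigma_u$, $\sigma_u\sigma_v$, and control $\sum_{i\ge 3}u_iv_i$ by the symmetric average of the two maxima bounds $u_iv_i\le u_1v_i$ and $u_iv_i\le u_iv_2$ (the paper phrases this as $\sum_{i\ge 3}u_1v_i\ge\sum_{i\ge 3}u_iv_i$ and $\sum_{i\ge 3}u_iv_2\ge\sum_{i\ge 3}u_iv_i$, which is your termwise estimate summed). Your closing two-direction argument for $\min_R I=\min_S I$ is the standard one the paper leaves implicit.
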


\begin{proof}
Since,
\begin{align*}
    \sum_{i=3}^{|\mathcal{V}|} u_1 v_i \geq \sum_{i=3}^{|\mathcal{V}|} u_i v_i \quad \text{and} \quad \sum_{i=3}^{|\mathcal{V}|} u_i v_2 \geq \sum_{i=3}^{|\mathcal{V}|} u_i v_i ,
\end{align*}
it follows that,
\begin{align*}
    I(u^*,v^*) &= 1 - \sum_{i=1}^{|\mathcal{V}|} u^*_i v^*_i = 1 - \left ( u_1 + \frac{S_u}{2} \right ) \left ( v_1 + \frac{S_v}{2} \right ) - \left ( u_2 + \frac{S_u}{2} \right ) \left ( v_2 + \frac{S_v}{2} \right ) \\
    &= 1- u_1 v_1 - u_1 \frac{S_v}{2} - v_1 \frac{S_u}{2} - \frac{S_u}{2} \frac{S_v}{2} - u_2 v_2 - u_2 \frac{S_v}{2} - v_2 \frac{S_u}{2} - \frac{S_u}{2} \frac{S_v}{2} \\
    &\leq 1 - u_1 v_1 - u_2 v_2 -  u_1 \frac{S_v}{2} - v_2 \frac{S_u}{2} = 1 - u_1 v_1 - u_2 v_2 - \frac{1}{2} \sum_{i=3}^{|\mathcal{V}|} u_1 v_i - \frac{1}{2} \sum_{i=3}^{|\mathcal{V}|} u_i v_2 \\
    &\leq 1 - \sum_{i=1}^{|\mathcal{V}|} u_i v_i = I(u,v).
\end{align*}
\end{proof}

To conclude our proof we must minimize $I$ over $S$.

\begin{lemma} \label{lem: min I in S}
Let $(u,v) \in S$. Then, $\min_{(u,v) \in S} I(u,v) = \frac{1}{2}$
\end{lemma}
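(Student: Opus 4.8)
The plan is to reduce Lemma~\ref{lem: min I in S} to the two-dimensional computation already carried out in Example~\ref{ex: nontrivial indep not li}. First I would unpack the definition of $S$. A pair $(u,v)\in S$ satisfies $u_i=v_i=0$ for $i\geq 3$, together with the defining constraints of $R$: $\sum_i u_i=\sum_i v_i=1$, all entries in $[0,1]$, $u_1=\max_i u_i$, and $v_2=\max_i v_i$. Since $u_i=0$ for $i\geq 3$, the condition $u_1=\max_i u_i$ is equivalent to $u_1\geq u_2$, which together with $u_1+u_2=1$ forces $u_1\in[\tfrac12,1]$ and $u_2=1-u_1\in[0,\tfrac12]$. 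Symmetrically, $v_2=\max_i v_i$ forces $v_2\in[\tfrac12,1]$ and $v_1=1-v_2\in[0,\tfrac12]$. Thus $S$ is parametrized by the two scalars $x\doteq u_1$ and $y\doteq v_1$ ranging exactly over the set $R^*$ of \eqref{eq: example R*}.

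Next I would compute $I$ on $S$ in these coordinates. Since only the first two coordinates of $u$ and $v$ are nonzero,
\[
I(u,v)=1-u_1v_1-u_2v_2=1-xy-(1-x)(1-y)=x+y-2xy,
\]
which is precisely the objective $I(x,y)$ of the constrained program~\eqref{eq: example independent mechanisms}. Minimizing $I$ over $S$ is therefore literally minimizing $x+y-2xy$ over $R^*$, and Example~\ref{ex: nontrivial indep not li} already establishes that this minimum equals $\tfrac12$: the direction of steepest descent $-\nabla I=(2y-1,2x-1)$ points into the boundary set $S$ described there (not to be confused with the $S$ of the appendix) for every interior point of $R^*$, so the optimum is attained on that boundary set, on which a direct substitution gives $I\equiv\tfrac12$. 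This yields $\min_{(u,v)\in S} I(u,v)=\tfrac12$, which is the statement of the lemma.

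Combined with Lemma~\ref{lem: I(u^*,v^*) < I(u,v)} this gives $\min_{(u,v)\in R}I(u,v)=\tfrac12$, and then Lemmas~\ref{lem: nontrivial mechanisms and R} and~\ref{lem: relate I and iota} yield $\iota\geq\tfrac12$ for every nontrivial independent mechanism; tightness follows by exhibiting a binary mechanism attaining equality (e.g.\ $x=\tfrac12$ in Example~\ref{ex: nontrivial indep not li}), completing Theorem~\ref{teo: nontrivial indep not li}. I do not expect a genuine obstacle here: the only point requiring care is the faithful translation of the two ``$\max$'' constraints in the definition of $R$ into the scalar bounds $u_1\geq\tfrac12$ and $v_2\geq\tfrac12$, so that the feasible region in the $(x,y)$ coordinates coincides \emph{exactly} with $R^*$ — no spurious or missing constraints. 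Everything else is the substitution above plus the already-proved planar claim, so the step is essentially bookkeeping.
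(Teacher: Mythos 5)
Your proof is correct and follows the same route as the paper's: parametrize $S$ by $(x,y)=(u_1,v_1)$, check that the feasible region is exactly $R^*$, observe $I=x+y-2xy$, and invoke the planar minimization from Example~\ref{ex: nontrivial indep not li}. If anything, you spell out the translation of the two max-constraints into $u_1\geq\tfrac12$ and $v_1\leq\tfrac12$ more explicitly than the paper does, which is a welcome addition of detail rather than a deviation.
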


\begin{proof}
Since $u^*_2 = 1-u^*_1$, $v^*_2 = 1-v^*_1$, minimizing $I$ over $R$ is equivalent to minimizing over the same two-dimensional space defined in (2) of Example 1 in the main text,
\[R^* = \left \{(u,v)\in \mathbb{R}^2 : \frac{1}{2} \leq u \leq 1 \quad \text{and} \quad 0 \leq v \leq \frac{1}{2} \right \} .\]
Thus, as shown in the example, $\min_{(u,v) \in R^*} I(u,v) = \frac{1}{2}$.
\end{proof}

We now prove Theorem 1 of the main text.

\begin{theoremapp} 
Let $\mathcal{M}: \mathcal{D} \to \mathcal{V}$ be an independent, $\iota$-LI, nontrivial mechanism. Then, $\iota \geq \frac{1}{2}$. Furthermore, this bound is tight.
\end{theoremapp}

\begin{proof}
By Lemma \ref{lem: nontrivial mechanisms and R}, there exists two neighboring datasets $d_i \sim d_k$ such that $(M_i,M_j) \in R$. Lemmas \ref{lem: I(u^*,v^*) < I(u,v)} and \ref{lem: min I in S} imply that $I(M_i,M_j) \geq \frac{1}{2}$. But then, by Lemma \ref{lem: relate I and iota}, $\iota \geq \frac{1}{2}$. 

To see that the bound is tight, assume $d_1 \sim d_2$ and set $M_{11}=M_{12}=\frac{1}{2}$, $M_{1j}=0$ for $j>2$, $M_{i2}=1$, and $M_{ij}=0$ for all the other $i$ and $j$. Then, 
\[ I(M_i,M_k)= \left\{\begin{matrix}
\frac{1}{2} & \text{if $i=1$ or $k=1$ ,}\\ 
0 & \text{otherwise}  .
\end{matrix}\right. \]
\end{proof}

\section*{Acknowledgement}

The authors would like to thank the reviewers for their valuable comments contributing to the improvement of the paper. The work of P. Sadeghi was supported by the ARC Future Fellowship, FT190100429.

\bibliographystyle{IEEEtran}
\bibliography{references.bib}

\end{document}